\newtheorem{assumption}[theorem]{Assumption}
\numberwithin{equation}{section}
\newcommand{\red}[1]{{\color{black}#1}}
\newcommand{\abs}[1]{\left|#1\right|}
\begin{document}

\markboth{H. Zhu,T. Liu and E. Zhou}{Risk Quantification in Stochastic Simulation under Input Uncertainty}

\title{Risk Quantification in Stochastic Simulation under Input Uncertainty}
\author{HELIN ZHU
\affil{Georgia Institute of Technology}
TIANYI LIU
\affil{Georgia Institute of Technology}
ENLU ZHOU
\affil{Georgia Institute of Technology}
}

\begin{abstract}
When simulating a complex stochastic system, the behavior of output response depends on input parameters estimated from finite real-world data, and the finiteness of data brings input uncertainty into the system.
The quantification of the impact of input uncertainty on output response has been extensively studied.
Most of the existing literature focuses on providing inferences on the mean response at the true but unknown input parameter, including point estimation and confidence interval construction.
Risk quantification of mean response under input uncertainty often plays an important role in system evaluation and control, because it provides inferences on extreme scenarios of mean response in all possible input models.
To the best of our knowledge, it has rarely been systematically studied in the literature.
In this paper, first we introduce risk measures of mean response under input uncertainty, and propose a nested Monte Carlo simulation approach to estimate them.
Then we develop asymptotical properties such as consistency and asymptotic normality for the proposed nested risk estimators.
 We further study the associated budget allocation problem for efficient nested risk simulation, and \red{finally use a sharing economy example to illustrate the importance of accessing and controlling risk due to input uncertainty.}
\end{abstract}

%
%
 \begin{CCSXML}
<ccs2012>
<concept>
<concept_id>10010147.10010341</concept_id>
<concept_desc>Computing methodologies~Modeling and simulation</concept_desc>
<concept_significance>500</concept_significance>
</concept>
<concept>
<concept_id>10010147.10010341.10010342.10010343</concept_id>
<concept_desc>Computing methodologies~Modeling methodologies</concept_desc>
<concept_significance>500</concept_significance>
</concept>
<concept>
<concept_id>10010147.10010341.10010342.10010345</concept_id>
<concept_desc>Computing methodologies~Uncertainty quantification</concept_desc>
<concept_significance>500</concept_significance>
</concept>
</ccs2012>
\end{CCSXML}

\ccsdesc[500]{Computing methodologies~Modeling and simulation}
\ccsdesc[500]{Computing methodologies~Modeling methodologies}
\ccsdesc[500]{Computing methodologies~Uncertainty quantification}
%
%


\keywords{Input uncertainty, risk quantification, Monte Carlo simulation, nested risk estimators, budget allocation}

\acmformat{Helin Zhu, Tianyi Liu and Enlu Zhou, 2017. Risk Quantification in Stochastic Simulation under Input Uncertainty.}

\begin{bottomstuff}
This work was supported by National Science Foundation under Grants CMMI-1413790 and CAREER CMMI-1453934, and Air Force Office of Scientific Research under Grant YIP FA-9550-14-1-0059.

Author's addresses: H. Zhu, T. Liu {and} E. Zhou, H. Milton Stewart School of Industrial and Systems Engineering, Georgia Institute of Technology, Atlanta, GA 30332.
\end{bottomstuff}

\maketitle

\section{Introduction and Motivation}\label{Sec1:Introduction}
For a complex real-world stochastic system, simulation is a powerful tool to analyze its behavior when real experiments on the system are expensive or difficult to conduct. Simulation is driven by input models that are distributions capturing the randomness in the system. For example, when simulating a queueing network, the random customer arrival and service times are generated from appropriate distributions (i.e., input models).
The uncertainty on input parameters (e.g., customer arrival rates and service rates) may need to be taken into account, since they are typically estimated from finite records of historical data.
In general, there are two sources of uncertainty in a typical stochastic simulation experiment: the extrinsic uncertainty on input parameters (referred to as \emph{input parameter uncertainty}, or simply \emph{input uncertainty}) that reflects the variability of the finite data used to estimate input parameters,
and the intrinsic uncertainty on output response (referred to as \emph{stochastic uncertainty}) that reflects the inherent stochasticity of the system.

The variability of simulation output response clearly depends on both input uncertainty and stochastic uncertainty.
An important question to address is how to quantify the impact of input uncertainty on output response variability in the presence of stochastic uncertainty.
Various quantification methods have been proposed, including frequentist and Bayesian methods among many others.
Frequentist methods include the Direct/Bootstrap Resampling methods by \cite{barton1993uniform,barton2001resampling}, \cite{cheng1997sensitivity}, etc.
The input model for these methods can be a non-parametric empirical distribution or a parametric distribution estimated from historical data.
Bayesian methods include the Bayesian Model Averaging (BMA) methods by \cite{chick2001input}, \cite{zouaoui2003accounting,zouaoui2004accounting}, \cite{biller2011accounting}, etc.
In these methods, a Bayesian updating rule is applied on a chosen prior distribution of input parameter to generate a posterior parameter distribution, which is then used as the sampling distribution of input parameter in the simulation experiment.
In addition to these methods, \cite{cheng1997sensitivity} also develops the $\delta$-method, which decomposes the variance of output response into two components that are caused by input uncertainty and stochastic uncertainty, respectively.
\cite{song2015quickly} develops a method for quickly assessing the relative contribution of each input distribution to the overall variance.
In recent years, with the rise of stochastic kriging in stochastic simulation (e.g., \cite{ankenman2010stochastic}), meta-model assisted methods have been developed for quantifying input uncertainty, see \cite{barton2013quantifying}, \cite{xie2014bayesian,xie2014statistical}, etc.
\cite{henderson2003} provides an early review on the importance of input uncertainty and common methods to deal with it. \cite{barton2012tutorial} provides a more recent review on popular methods in output analysis under input uncertainty, and highlights some remaining challenges in this area.

Some of the aforementioned works aim at providing inferences on the mean response at the true but unknown input parameter, often through point estimation and confidence interval (CI) construction.
Some others focus on obtaining an empirical distribution of mean response, and providing a more complete picture of all possible scenarios of mean response under input uncertainty.
However, to the best of our knowledge, rigorous quantification of extreme scenarios of mean response in all possible input models is still lacking.
Such quantification could provide inferences on system sensitivity or robustness to input uncertainty, and thus would be critical for control of the system especially when the decisions being made are irrevocable.

\textcolor{black}{For example, consider the system of a call center.  There is often a service-level agreement in which there are penalties based on whether the fraction of calls answered within a certain time period exceeds a threshold.  
After the number of operators being determined,   we can hardly change this decision and ask more operators to work temporarily. In this situation, it is critical to assess and control the risk of low fraction of calls answered, in all possible input models, since this might lead to penalties which will affect the reputation of the company.}

For another example, consider a large-scale power system. It is usually too expensive or risky to conduct real experiments on the system operation, and therefore, stochastic simulation is often used to study the economics, reliability, and emission variable effects of power systems operating in a market environment (\cite{degeilhstochastic}).
In a typical power system simulation experiment, the inputs may include the resource parameters, the loading (market demand) parameters, etc., which all exhibit variability and uncertainty.
The risk quantification and management of system performance under input uncertainty is of great importance because extreme scenarios of mean response (e.g., high mean power loads during peak time) might cause a part or whole breakdown of the power system and lead to disastrous outcomes.


In this paper, we aim to quantify the risk in stochastic simulation under input uncertainty, by studying risk measures of mean response w.r.t. the distribution of input parameter.
We will focus on risk measures such as Value-at-Risk (VaR) and Conditional Value-at-Risk (CVaR).
Loosely speaking, VaR characterizes an extreme (e.g., 99\%) quantile of the mean response distribution, and CVaR characterizes the conditional expectation of a tail portion of the mean response distribution.
VaR, as one of the very earliest risk measures introduced in financial risk management, is easy to understand and interpret for practitioners. CVaR, as a classic coherent risk measure (see, e.g., \cite{artzner1997coherent}), exhibits nice properties such as convexity and monotonicity for optimization (see, e.g., \cite{rockafellar2000optimization}).
They have been extensively used in financial industry, especially after the financial crisis in 2008. An abundant literature has dedicated to studying the estimation and optimization of risk measures under various settings; in particular, \cite{hong2014monte} provides a comprehensive review of Monte Carlo methods for VaR and CVaR.

We will introduce VaR and CVaR for quantifying the risk in stochastic simulation under input uncertainty, and provide numerical schemes for their estimation.
Specifically, we will study nested Monte Carlo estimators for VaR and CVaR of mean response from both theoretical and computational perspectives. Our numerical examples illustrate the importance and necessity of risk quantification under input uncertainty.
To summarize, the contributions of this paper are three-fold:
\begin{itemize}
\item[(1)] For output analysis in stochastic simulation, our work is among the first to systematically study risk quantification of mean response in all possible input models using risk measures.

\item[(2)] Under the respective ``Weak Assumption'' and ``Strong Assumption'' (elaborated in Section \ref{Sec3:Asymptotic}), we show that the proposed nested risk estimators are consistent in different limiting senses. \textcolor{black}{Under ``Strong Assumption", they are  asymptotically normally distributed as well, which is the guarantee for constructing asymptotically valid CIs.}

\item[(3)] We solve the associated budget allocation problem that arises in nested simulation of risk estimators, in order to improve simulation efficiency. The numerical study demonstrates the effectiveness of our approach and shows that the obtained budget allocation schemes drastically reduce the widths of the CIs constructed.
\end{itemize}

We note that, in a broader sense, our framework bears some similarity with risk assessment in credit management, since both of them deal with simulating certain conditional expectations.
The work most relevant to ours is probably \cite{gordy2010nested}, in which the authors study the asymptotic representation of the Mean Squared Error (MSE) of nested risk estimators in credit risk management.
By minimizing MSE asymptotically, they obtain an (asymptotically) optimal budget allocation scheme.
In contrast, our work focuses on the analysis of asymptotical properties such as consistency and asymptotic normality of the proposed nested risk estimators.
\textcolor{black}{Furthermore, the associated budget allocation problem in our approach is to minimize the widths of the wider half CIs, which is similar to the MSE criterion in \cite{gordy2010nested} but from a different point of view. Moreover, we propose a new approach to estimate all the parameters needed in the problem, so that this budget allocation procedure can be used widely in practice.}

Other common approaches for credit risk management include but not limited to the delta-gamma method by \cite{rouvinez1997}, \cite{glasserman2000variance}, etc;
the two-level confidence interval procedure with screening by \cite{lan2010confidence}, etc;
the stochastic kriging method by \cite{liu2010stochastic}, etc;
the ranking and selection method by \cite{broadie2011efficient}, etc.
Among other relevant literature, \cite{lee1998monte} studies point estimation of a quantile (VaR) of the distribution of a conditional expectation via a two-level simulation;
\cite{steckley2006} considers estimating the density of a conditional expectation using kernel density estimation;
\cite{sun2011efficient} studies efficient nested simulation for estimating the variance of a conditional expectation.
Most of these works focus on efficient allocation of inner simulation sizes across different outer scenarios, and \cite{lee1998monte}, \cite{steckley2006}, and \cite{sun2011efficient} consider optimal allocation between inner and outer sampling.
Our work distinguishes from these works in that we focus on the theoretical properties of nested risk estimators, and our budget allocation scheme can be viewed as a byproduct of the theoretical properties established.
We do point out that varying inner-layer sample sizes across different outer-layer scenarios, as studied in some of the aforementioned works, could be further incorporated here to improve simulation efficiency;
however, it is beyond the scope of this paper.

The rest of the paper is organized as follows.
In Section \ref{Sec2:Risk}, we introduce risk measures VaR and CVaR of mean response w.r.t. input uncertainty, and propose nested risk estimators for risk quantification in stochastic simulation under input uncertainty.
In Section \ref{Sec3:Asymptotic}, we establish the asymptotical properties of the proposed nested risk estimators under different assumptions, and then construct asymptotically valid CIs.
We formulate the associated budget allocation problem and propose a new approach to solve it in Section \ref{Sec4:Budget}.
In Section \ref{Sec5:Numerical}, we conduct numerical experiments to demonstrate some of the theoretical results from previous sections.
Conclusions are provided in Section \ref{Sec6:Consusion}.

\section{Risk Measures of Mean Response under Input Uncertainty}\label{Sec2:Risk}

\subsection{Formulation}\label{Sec2.sub1:Formulation}

Let us first rigorously define risk measures VaR and CVaR of mean response under input uncertainty.
In a stochastic simulation experiment, consider a response function in the form of $h(\theta;\xi)$, where $\theta$ represents the input parameter(s) and $\xi$ represents the noise (stochastic uncertainty) in the response.
Let $H(\theta)=\mathbb{E}_{\xi}[h(\theta;\xi)]$ be the mean response,
and thus $h(\theta;\xi)=H(\theta)+\mathcal{E}(\theta;\xi)$,
where $\mathcal{E}(\theta;\xi)$ is the stochastic noise that satisfies $\mathbb{E}[\mathcal{E}(\theta;\xi)|\theta]=0$ and $Var[\mathcal{E}(\theta;\xi)|\theta]=\tau^2_\theta$.
Here assume $\tau^2_\theta$ is a finite deterministic function of $\theta$.
Furthermore, suppose there is a probability distribution (called ``belief distribution'') on $\theta$ that reflects our belief on input uncertainty, since $\theta$ needs to be inferred from finite historical data.
For example, if one takes a Bayesian approach, then the belief distribution is constructed via Bayesian updating. Of course, there are other approaches such as bootstrapping.
Specifically, suppose $p_o(\theta)$ is a prior distribution on $\theta$, and it could be either non-informative or informative depending on prior knowledge.
Then the posterior distribution $p(\cdot|\mathbf{x})$ is obtained via sequential Bayesian updating with historical data $\mathbf{x}$.
Assume $\tau^2 :=\int \tau^2_\theta p(\theta|\mathbf{x})d\theta$ is also finite.

Let $0<\alpha<1$ be the risk level of interest (e.g., $\alpha=0.99$).
Then VaR of the mean response $H(\theta)$, denoted by $v_{\alpha}\left(\mathbb{E}_{\xi}[h(\theta;\xi)]\right)$ (or interchangeably $v_{\alpha}\left(H(\theta)\right)$), is defined by the $\alpha$-quantile of $H(\theta)$, i.e.,
\begin{equation}\label{eq.3.1}
v_{\alpha}\left(H(\theta)\right)\overset{\triangle}=\inf\{t: F(t)\ge \alpha\},
\end{equation}
where $F(\cdot)$ is the cumulative distribution function (c.d.f.) of $H(\theta)$.
When $H(\theta)$ admits a positive and continuous probability density function (p.d.f.), which is denoted by $f(\cdot)$, around $v_{\alpha}\left(H(\theta)\right)$, (\ref{eq.3.1}) can be simplified as $v_{\alpha}\left(H(\theta)\right)
=F^{-1}(\alpha)$.
CVaR of $H(\theta)$, denoted by $c_{\alpha}\left(\mathbb{E}_{\xi}[h(\theta;\xi)]\right)$ (or interchangeably $c_{\alpha}\left(H(\theta)\right)$), is defined by the conditional expectation of the $\alpha$-tail distribution of $H(\theta)$, i.e.,
\begin{eqnarray}\label{eq.3.2}
c_{\alpha}\left(H(\theta)\right)
\overset{\triangle}=v_{\alpha}(H(\theta))+
\frac{1}{1-\alpha}\mathbb{E}_{p(\cdot|\mathbf{x})}
\left[\left(H(\theta)-v_{\alpha}(H(\theta))\right)^+\right].
\end{eqnarray}
With slight abuse of notations, we use $v_\alpha$ and $c_\alpha$ as the abbreviations for $v_{\alpha}\left(H(\theta)\right)$ and $c_{\alpha}\left(H(\theta)\right)$, respectively.

Calculating risk measures such as $v_\alpha$ and $c_\alpha$ is straightforward when the system is simple.
For example, when the p.d.f. of $H(\theta)$ admits an explicit expression, VaR or CVaR of $H(\theta)$ could be calculated via numerical integration.

\subsection{Nested Simulation of VaR and CVaR}\label{Sec2.sub2:Nested}

Let us first consider Monte Carlo estimation of $v_\alpha$ and $c_\alpha$ without the presence of stochastic uncertainty. That is, $H(\theta)$ can be evaluated exactly for all $\theta$.

First, draw $N$ i.i.d. scenarios $\theta_1,...,\theta_N$ from the belief distribution $p(\theta|\mathbf{x})$; then, simulate $\{H(\theta_i): i=1,...,N\}$ and sort them in ascending order, denoted by $H(\theta_{(1)})\le H(\theta_{(2)})\le\cdot\cdot\cdot\le H(\theta_{(N)})$; finally, estimators of $v_\alpha$ and  $c_{\alpha}$ are given, respectively, by
\begin{eqnarray*}
\widehat{v}_{\alpha}&=&H(\theta_{(\alpha N)}),\label{eq.3.3}\\
\widehat{c}_{\alpha}&=&
\widehat{v}_{\alpha}+\frac{1}{(1-\alpha) N}\sum\limits_{i=1}^{N} \left(H(\theta_{i})-\widehat{v}_{\alpha}\right)^+\label{eq.3.4},
\end{eqnarray*}
where for convenience we assume $\alpha N$ is an integer. Intuitively, $\widehat{v}_{\alpha}$ is the $\alpha$-level VaR of the empirical mean response distribution consisting of $\{H(\theta_{(i)}): i=1,...,N\}$. In parallel, $\widehat{c}_{\alpha}$ is the $\alpha$-level CVaR of the empirical mean response distribution. The properties of $\widehat{v}_{\alpha}$ and $\widehat{c}_{\alpha}$ have been well-studied in the literature. For example, although $\widehat{v}_\alpha$ and $\widehat{c}_{\alpha}$ are not unbiased, they are strongly consistent and asymptotically normally distributed under appropriate regularity conditions (\cite{sun2010asymptotic}).

When stochastic uncertainty is present, the exact value of $H(\theta)$ might not be readily available; instead, it is estimated via sample average. Naturally, to obtain estimators of $v_\alpha$ and $c_\alpha$, we can extend the estimation procedure described above by replacing $\{H(\theta_i)\}$ with their sample average estimates $\{\widehat{H}(\theta_i)\}$.
Specifically, for each input scenario $\theta_i$, simulate $M$ i.i.d. response samples $\{h(\theta_i; \xi_{ij}): j=1,...,M\}$; then, approximate $H(\theta_i)$ by $\widehat{H}_M(\theta_i)=\frac{1}{M}\sum_{j=1}^{M} h(\theta_i;\xi_{ij})$ and sort them in ascending order, denoted by $\widehat{H}_M(\theta^{(1)})\le \widehat{H}_M(\theta^{(2)})\le \cdot\cdot\cdot \le \widehat{H}_M(\theta^{(N)})$; finally, estimate $v_\alpha$ and $c_\alpha$, respectively, by
\begin{eqnarray}
\widetilde{v}_{\alpha}&=&\widehat{H}_M(\theta^{(\alpha N)}),\label{eq.3.5}\\
\widetilde{c}_{\alpha}&=&
\widetilde{v}_{\alpha}+\frac{1}{(1-\alpha) N}\sum\limits_{i=1}^{N} \left(\widehat{H}_M(\theta_{i})-
\widetilde{v}_{\alpha}\right)^+\label{eq.3.6}.
\end{eqnarray}
We refer to $\widetilde{v}_{\alpha}$ or $\widetilde{c}_{\alpha}$ as ``nested risk estimator'', since nested simulation is incurred in the estimation.
Due to nested simulation, the asymptotical properties of $\widetilde{v}_{\alpha}$ and $\widetilde{c}_{\alpha}$ become more complicated.
In next section, we will show that $\widetilde{v}_{\alpha}$ and $\widetilde{c}_{\alpha}$ maintain to be strongly consistent and asymptotically normally distributed in different limiting senses under different sets of regularity conditions.
Hence, using them as inferences for $v_\alpha$ and $c_\alpha$, respectively, is still reasonable.

Note that the ordered statistics $(\theta^{(1)},...,\theta^{(N)} )$ and $(\theta_{(1)},...,\theta_{(N)} )$ are different.
In fact, for fixed input scenarios $\theta_1,...,\theta_N$, $(\theta_{(1)},...,\theta_{(N)} )$ is a constant vector, while $(\theta^{(1)},...,\theta^{(N)} )$ is a random permutation of $(\theta_{(1)},...,\theta_{(N)} )$ that depends on the realizations of $\{h(\theta_i; \xi_{ij})\}$.

\begin{remark}\label{rem.2.1}
In \cite{barton2013quantifying} and \cite{xie2014bayesian}, the authors use nested VaR estimator $\widetilde{v}_{\rho/2}$ and $\widetilde{v}_{1-\rho/2}$ as the lower-upper boundaries of a credible interval (CrI) for $H(\theta_c)$ with confidence level $(1-\rho)$, where $H(\theta_c)$ is the mean response at the true but unknown input parameter $\theta_c$.
\end{remark}

\section{Asymptotic Analysis of Nested VaR and CVaR Estimators}
\label{Sec3:Asymptotic}

In this section, we analyze the asymptotical properties of nested risk estimators $\widetilde {v}_{\alpha}$ and $\widetilde{c}_{\alpha}$, as the inner and outer sample sizes both go to infinity. In particular, we will prove their strong consistency and asymptotic normality in different limiting senses under different sets of regularity assumptions, which are referred to as ``Weak Assumption'' and ``Strong Assumption'', respectively. \textcolor{black}{Under ``Weak Assumption", the consistency result includes iterative limits which make it hard to use in practice. Thus, the stronger result under ``Strong Assumption" which allows the number of outer layer scenarios and inner layer samples to go to infinity simultaneously will be of help. }

\begin{assumption}\label{asm.3.1} \textbf{Weak Assumption}.
\begin{itemize}
\item [(i)] The response $h(\theta;\xi)$ has finite conditional second moment, i.e., $\tau_\theta^2=\mathbb{E}[h^2(\theta;\xi)|\theta]<\infty$ w.p.1 and $\tau^2=\int \tau_\theta^2 p(\theta|\mathbf{x})d\theta<\infty$.

\item [(ii)] The p.d.f. $f(\cdot)$ of the mean response $H(\theta)$ is positive and continuous, and continuously differentiable around $v_{\alpha}$.
\end{itemize}
\end{assumption}

Assumption \ref{asm.3.1} is weak in the sense that it imposes separate assumptions on input uncertainty and stochastic uncertainty. In contrast to the following Strong Assumption, it does not impose joint assumptions on input uncertainty and stochastic uncertainty.

Notice that
$$
\widehat{H}_M(\theta)= \frac{1}{M}\sum_{j=1}^{M} h(\theta;\xi_{j})=\frac{1}{M}\sum_{j=1}^{M} (H(\theta)+\mathcal{E}(\theta;\xi_{j}))=H(\theta)+
\frac{1}{M}\sum_{j=1}^{M}\mathcal{E}(\theta;\xi_{j}).
$$
Let us define a normalized noise term $\bar{\mathcal{E}}_M$ by
$$
\bar{\mathcal{E}}_M\overset{\triangle}=
\sqrt{M}\cdot\frac{1}{M}\sum_{j=1}^{M}\mathcal{E}(\theta;\xi_{j}).
$$
By Central Limit Theorem, under appropriate assumptions $\bar{\mathcal{E}}_M$ has a limiting distribution as $M\rightarrow \infty$.
Note that $\widehat{H}_M(\theta)=H(\theta)+\bar{\mathcal{E}}_M/\sqrt{M}$, then the effect of the diminishing noise term $\bar{\mathcal{E}}_M/\sqrt{M}$ on the distribution of $\widehat{H}_M(\theta)$ will vanish as $M\rightarrow \infty$.
Therefore, we expect the ``distance'' between the distribution of $\widehat{H}_M(\theta)$ and the distribution of $H(\theta)$ to vanish as $M\rightarrow \infty$.
That is, for a fixed $\theta$, $\widetilde{f}_M \rightarrow f$ as $M \rightarrow \infty$, where $\widetilde{f}_M(\cdot)$ represents the p.d.f. of $\widehat{H}_M(\theta)$.
In particular, the following Strong Assumption guarantees that $\widetilde{f}_M$ converges to $f$ sufficiently fast.

\begin{assumption}\label{asm.3.2} \textbf{Strong Assumption}.
\begin{itemize}
\item [(i)] The response $h(\theta;\xi)$ has finite conditional second moment, i.e., $\tau_\theta^2=\mathbb{E}[h^2(\theta;\xi)|\theta]<\infty$ w.p.1 and $\tau^2=\int \tau_\theta^2 p(\theta|\mathbf{x})d\theta<\infty$.

\item[(ii)] The joint density $p_M(h,e)$ of $H(\theta)$ and $\bar{\mathcal{E}}_M$, and its partial derivatives $\frac{\partial}{\partial h}p_M(h,e)$ and $\frac{\partial^2}{\partial h^2}p_M(h,e)$ exist for each $M$ and for all pairs of $(h, e)$.

\item[(iii)] There exist non-negative functions $g_{0,M}(\cdot)$, $g_{1,M}(\cdot)$ and $g_{2,M}(\cdot)$ such that $p_M(h, e)\le g_{0,M}(e)$, $\abs{\frac{\partial}{\partial h}p_M(h, e)} \le g_{1,M}(e)$, $\abs{\frac{\partial^2}{\partial h^2}g_M(h, e)} \le g_{2,M}(e)$ for all $(h, e)$. Furthermore, $\sup\limits_{M}\int |e|^r g_{i,M}(e) de <\infty$ for $i=0,1,2$, and $0\le r \le 4$.
\end{itemize}
\end{assumption}

Assumption \ref{asm.3.2} is strong in the sense that it imposes joint assumptions on input uncertainty and stochastic uncertainty. In particular, Assumption \ref{asm.3.2}.(i) ensures that $\bar{\mathcal{E}}_M$ has a limiting distribution as $M\rightarrow \infty$; Assumption \ref{asm.3.2}.(ii) and \ref{asm.3.2}.(iii) (similar to Assumption 1 of \cite{gordy2010nested}) ensure that the difference between $\widetilde{f}_M(\cdot)$ and $f(\cdot)$ is of the order $O(\frac{1}{M})$. Assumption \ref{asm.3.2} holds when $h(\cdot,\cdot)$ is sufficiently smooth, and the distributions of $\theta$ and $\xi$ have good structural properties (e.g., finite moments up to some order). Note that when Strong Assumption holds, Weak Assumption naturally holds.

\subsection{Consistency}
\label{Sec3.sub1:Consistency}

It turns out, under Weak Assumption, nested risk estimators $\widetilde {v}_{\alpha}$ and $\widetilde{c}_{\alpha}$ are consistent in the sense that they converge to ${v}_{\alpha}$ and ${c}_{\alpha}$ w.p.1, respectively, when $M$ first goes to infinity and then $N$ goes to infinity. In particular, we have the following Theorem \ref{thm.3.1} on the consistency of $\widetilde {v}_{\alpha}$ and $\widetilde{c}_{\alpha}$ under Weak Assumption.

\begin{theorem}\label{thm.3.1} \textbf{Consistency under Weak Assumption.}
Under Assumption \ref{asm.3.1}, we have
\begin{equation}\label{eq.3.7}
\lim\limits_{N\rightarrow \infty}\lim\limits_{M\rightarrow \infty}
\widetilde{v}_{\alpha}
= v_{\alpha},~ w.p.1, \quad\mbox{and}\quad
\lim\limits_{N\rightarrow \infty}\lim\limits_{M\rightarrow \infty}
\widetilde{c}_{\alpha}
= c_{\alpha},~w.p.1.
\end{equation}
\end{theorem}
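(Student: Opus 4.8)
The plan is to exploit the iterated structure of the limit, handling the inner limit $M\to\infty$ (with $N$ held fixed) and the outer limit $N\to\infty$ separately. The key observation is that for a fixed number $N$ of outer scenarios $\theta_1,\dots,\theta_N$, the inner limit collapses the nested estimators onto the \emph{idealized} estimators $\widehat{v}_\alpha = H(\theta_{(\alpha N)})$ and $\widehat{c}_\alpha = \widehat{v}_\alpha + \frac{1}{(1-\alpha)N}\sum_{i=1}^N(H(\theta_i)-\widehat{v}_\alpha)^+$ built from the exact mean responses, after which the outer limit is dispatched by the known strong consistency of empirical VaR/CVaR established in \cite{sun2010asymptotic}.

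For the inner limit I would first fix the $N$ scenarios and condition on them. Since Assumption \ref{asm.3.1}.(i) guarantees a finite conditional second moment (hence finite conditional first moment) of $h(\theta_i;\xi)$, the Strong Law of Large Numbers yields $\widehat{H}_M(\theta_i)\to H(\theta_i)$ w.p.1 as $M\to\infty$ for each $i$; because there are only finitely many scenarios, this convergence holds jointly, w.p.1. The next step is to show that the \emph{random} ordering $(\theta^{(1)},\dots,\theta^{(N)})$ induced by $\{\widehat{H}_M(\theta_i)\}$ eventually coincides with the deterministic ordering $(\theta_{(1)},\dots,\theta_{(N)})$ induced by $\{H(\theta_i)\}$. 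Here I would invoke Assumption \ref{asm.3.1}.(ii): because $H(\theta)$ has a continuous density, the values $H(\theta_1),\dots,H(\theta_N)$ are almost surely distinct, so the gap $\delta := \min_{i\neq j}|H(\theta_i)-H(\theta_j)|$ is strictly positive w.p.1. Once $M$ is large enough that $\max_i|\widehat{H}_M(\theta_i)-H(\theta_i)|<\delta/2$, the two orderings agree, whence $\theta^{(\alpha N)}=\theta_{(\alpha N)}$ and $\widetilde{v}_\alpha=\widehat{H}_M(\theta_{(\alpha N)})\to H(\theta_{(\alpha N)})=\widehat{v}_\alpha$ w.p.1. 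For CVaR I would use the same joint convergence together with the continuity of the map $x\mapsto x^+$ and the fact that the sum over $i$ is finite, giving $\widetilde{c}_\alpha\to\widehat{c}_\alpha$ w.p.1 as $M\to\infty$.

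The outer limit is then immediate. Since $\widehat{v}_\alpha$ and $\widehat{c}_\alpha$ are exactly the empirical $\alpha$-VaR and $\alpha$-CVaR of the i.i.d.\ sample $H(\theta_1),\dots,H(\theta_N)$, Assumption \ref{asm.3.1}.(ii) places us in the regularity setting of \cite{sun2010asymptotic}, which yields $\widehat{v}_\alpha\to v_\alpha$ and $\widehat{c}_\alpha\to c_\alpha$ w.p.1 as $N\to\infty$. Chaining the two limits gives the iterated statement \eqref{eq.3.7}.

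I would expect the only genuine subtlety to lie in the ordering-stabilization argument: one must be careful that the two distinct randomness sources (the outer draws of $\theta$ and the inner draws of $\xi$) are reconciled on a common probability-one event, combining the almost-sure distinctness of the $H(\theta_i)$ with the almost-sure inner convergence \emph{before} passing to the induced ordering. Because $N$ is fixed throughout the inner limit, this reduces to a finite intersection of probability-one events together with a uniform gap bound, so that no uniform-in-$\theta$ law of large numbers is required; this is precisely where the ``Weak'' nature of Assumption \ref{asm.3.1} — separate conditions on input and stochastic uncertainty — is enough, and it is also why the iterated (rather than joint) limit is the natural mode of convergence under this assumption.
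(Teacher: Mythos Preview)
Your proposal is correct and follows essentially the same approach as the paper: the same error decomposition into inner and outer limits, the same ordering-stabilization argument via the SLLN and a minimum-gap bound to show $\widetilde{v}_\alpha\to\widehat{v}_\alpha$ and $\widetilde{c}_\alpha\to\widehat{c}_\alpha$ as $M\to\infty$, and the same appeal to the known strong consistency of empirical VaR/CVaR for the outer limit. If anything, you are slightly more careful than the paper in explicitly invoking the continuous density from Assumption~\ref{asm.3.1}.(ii) to justify almost-sure distinctness of the $H(\theta_i)$, which the paper takes for granted when defining its gap $\epsilon$.
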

\begin{proof}
See Appendix \ref{ap.a}.
\end{proof}

Note that in Theorem \ref{thm.3.1} the limits on $N$ and $M$ are iterated and non-interchangeable.
Intuitively, the inner sample size $M$ going to infinity ensures that, for any fixed $\theta$, $\widehat{H}_M(\theta)\rightarrow H(\theta)$ w.p.1 (by Strong Law of Large Numbers).
It follows that for fixed $\theta_1,...,
\theta_N$,
$(\widehat{H}_M(\theta^{(1)}),...,\widehat{H}_M(\theta^{(N)}))
\rightarrow(H(\theta_{(1)}),...,H(\theta_{(N)}))$ w.p.1. as $M\rightarrow \infty$.
Therefore, $\widetilde{v}_{\alpha}\rightarrow\widehat{v}_{\alpha}$ and $\widetilde{c}_{\alpha}\rightarrow\widehat{c}_{\alpha}$ w.p.1 as $M\rightarrow \infty$.
In view of the fact that $\widehat{v}_{\alpha}\rightarrow v_{\alpha}$ and $\widehat{c}_{\alpha}\rightarrow c_{\alpha}$ w.p.1 as $N\rightarrow\infty$, Theorem \ref{thm.3.1} holds.

When Strong Assumption is imposed, we could strengthen the results in Theorem \ref{thm.3.1}.
In particular, the following Theorem \ref{thm.3.2} shows that the iterated limits on $N$ and $M$ in Theorem \ref{thm.3.1} could be relaxed into simultaneous limits.

\begin{theorem}\label{thm.3.2} \textbf{Consistency under Strong Assumption}.
Under Assumption \ref{asm.3.2}, we have
\begin{equation}\label{eq.3.9}
\lim\limits_{N, M\rightarrow \infty}\widetilde{v}_{\alpha}=v_{\alpha}, ~ w.p.1,
\quad\mbox{and}\quad
\lim\limits_{N, M\rightarrow\infty}\widetilde{c}_{\alpha} =c_{\alpha}, ~ w.p.1.
\end{equation}
\end{theorem}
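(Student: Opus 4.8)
The plan is to upgrade the iterated limit of Theorem~\ref{thm.3.1} to a simultaneous one by proving a \emph{uniform-in-$M$} law of large numbers for the empirical distribution of the inner averages. Write $G_N(t)=\frac1N\sum_{i=1}^N\mathbf 1\{H(\theta_i)\le t\}$ for the empirical c.d.f. of the exact mean responses, $\widehat F_{N,M}(t)=\frac1N\sum_{i=1}^N\mathbf 1\{\widehat H_M(\theta_i)\le t\}$ for the one actually simulated, and $\Delta_{i,M}:=\widehat H_M(\theta_i)-H(\theta_i)=\bar{\mathcal E}_{M,i}/\sqrt M$ for the inner estimation error in scenario $i$. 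Since $\widetilde v_\alpha$ and $\widetilde c_\alpha$ are respectively the $\alpha$-quantile and the $\alpha$-CVaR functional of $\widehat F_{N,M}$, the whole theorem reduces to showing that, w.p.1, $\sup_t\abs{\widehat F_{N,M}(t)-F(t)}\to 0$ in the double limit $N,M\to\infty$, together with a uniform-integrability statement for the CVaR tail term. Here the Strong Assumption earns its keep through the consequence recorded after Assumption~\ref{asm.3.2}: $\sup_t\abs{\widetilde f_M(t)-f(t)}=O(1/M)$, hence $\sup_t\abs{F_M(t)-F(t)}=O(1/M)$ for the c.d.f. $F_M$ of $\widehat H_M(\theta)$, so the distributional bias vanishes at a controlled rate and $\widetilde f_M$ stays bounded below by a positive constant on a fixed neighborhood of $v_\alpha$ for all large $M$.

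The hard part, and the only place where the simultaneous limit is genuinely at stake, is controlling the \emph{sampling error} uniformly in $M$. The naive route — apply Dvoretzky--Kiefer--Wolfowitz, $\mathbb P(\sup_t\abs{\widehat F_{N,M}-F_M}>\epsilon)\le 2e^{-2N\epsilon^2}$, and then Borel--Cantelli — fails, because for each fixed $N$ the index $M$ ranges over an infinite set and $\sum_{N,M}2e^{-2N\epsilon^2}=\infty$. I would instead exploit that all inner averages share the \emph{same} outer scenarios $\theta_1,\dots,\theta_N$. Monotonicity of indicators gives, for any $\delta>0$, the sandwich
\[
G_N(t-\delta)-R_{N,M}(\delta)\ \le\ \widehat F_{N,M}(t)\ \le\ G_N(t+\delta)+R_{N,M}(\delta),\qquad R_{N,M}(\delta):=\frac1N\sum_{i=1}^N\mathbf 1\{\abs{\Delta_{i,M}}>\delta\},
\]
where $\sup_t\abs{G_N(t)-F(t)}\to0$ w.p.1 by Glivenko--Cantelli (the outer scenarios being i.i.d.) and $\sup_{\abs{s-t}\le\delta}\abs{F(s)-F(t)}\to0$ as $\delta\to0$ by uniform continuity of $F$. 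For the remainder the key observation is that $S_i(\delta,M_0):=\mathbf 1\{\sup_{M\ge M_0}\abs{\Delta_{i,M}}>\delta\}$ are i.i.d. across $i$ and dominate $\mathbf 1\{\abs{\Delta_{i,M}}>\delta\}$ for every $M\ge M_0$, so $\sup_{M\ge M_0}R_{N,M}(\delta)\le\frac1N\sum_{i=1}^N S_i(\delta,M_0)$. Because the inner noise obeys the strong law ($\Delta_{\cdot,M}\to0$ w.p.1 by the finite conditional variance in Assumption~\ref{asm.3.2}.(i)), the common mean $\mathbb E[S_i(\delta,M_0)]=\mathbb P(\sup_{M\ge M_0}\abs{\Delta_{\cdot,M}}>\delta)\to0$ as $M_0\to\infty$; the SLLN in $i$ together with monotonicity of the tail suprema in $(N_0,M_0)$ then gives $R_{N,M}(\delta)\to0$ w.p.1 as $N,M\to\infty$. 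Sending $\delta\to0$ along a countable sequence yields $\sup_t\abs{\widehat F_{N,M}(t)-F(t)}\to0$ w.p.1 in the double limit.

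With uniform c.d.f. convergence in hand, the VaR conclusion is standard: since $f$ is positive and continuous at $v_\alpha$, $F$ is strictly increasing there, so $F(v_\alpha-\eta)<\alpha<F(v_\alpha+\eta)$ for small $\eta$, and once $\sup_t\abs{\widehat F_{N,M}-F}$ drops below $\min\{\alpha-F(v_\alpha-\eta),F(v_\alpha+\eta)-\alpha\}$ the empirical $\alpha$-quantile $\widetilde v_\alpha$ lies within $\eta$ of $v_\alpha$; hence $\widetilde v_\alpha\to v_\alpha$ w.p.1. For CVaR I would write $\widetilde c_\alpha=\widetilde v_\alpha+\frac{1}{1-\alpha}\widehat\psi_{N,M}(\widetilde v_\alpha)$ with $\widehat\psi_{N,M}(t):=\frac1N\sum_{i=1}^N(\widehat H_M(\theta_i)-t)^+$, and $c_\alpha=v_\alpha+\frac{1}{1-\alpha}\psi(v_\alpha)$ with $\psi(t):=\mathbb E_{p(\cdot|\mathbf x)}[(H(\theta)-t)^+]$. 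The map $t\mapsto\widehat\psi_{N,M}(t)$ is $1$-Lipschitz, so $\abs{\widehat\psi_{N,M}(\widetilde v_\alpha)-\widehat\psi_{N,M}(v_\alpha)}\le\abs{\widetilde v_\alpha-v_\alpha}\to0$, reducing matters to $\widehat\psi_{N,M}(v_\alpha)\to\psi(v_\alpha)$. Comparing with $\frac1N\sum_i(H(\theta_i)-v_\alpha)^+$ (which tends to $\psi(v_\alpha)$ by the SLLN, the first moment being finite since $\mathbb E[H(\theta)^2]\le\tau^2<\infty$ by Jensen) via the pointwise Lipschitz bound $\abs{(\widehat H_M(\theta_i)-v_\alpha)^+-(H(\theta_i)-v_\alpha)^+}\le\abs{\Delta_{i,M}}$ leaves only the remainder $\frac1N\sum_i\abs{\Delta_{i,M}}$ to control.

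The main obstacle is thus concentrated in the two remainders $R_{N,M}(\delta)$ and $\frac1N\sum_i\abs{\Delta_{i,M}}$. The former needs only the almost-sure SLLN for the inner noise, but the latter requires the stronger statement that the i.i.d. maximal errors $T_i(M_0):=\sup_{M\ge M_0}\abs{\Delta_{i,M}}$ are integrable with $\mathbb E[T_i(M_0)]\to0$, so that the SLLN in $i$ may be applied to $\frac1N\sum_i T_i(M_0)$. This is where I expect the finite-second-moment hypothesis of Assumption~\ref{asm.3.2}.(i) to be essential: viewing the running averages $\Delta_{i,M}$ as a reverse martingale in $M$ and invoking Doob's $L^2$ maximal inequality gives $\mathbb E[T_i(M_0)^2]\le C\,\tau^2/M_0$, whence $\mathbb E[T_i(M_0)]\le (C\tau^2/M_0)^{1/2}\to0$. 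Assembling the $O(1/M)$ distributional bias, the uniform sampling control of the previous paragraph, and this tail-functional convergence delivers \eqref{eq.3.9}.
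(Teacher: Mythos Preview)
Your argument is correct, but it takes a genuinely different route from the paper's proof in Appendix~\ref{ap.b}. The paper decomposes through the intermediate quantity $\breve v_\alpha^M:=v_\alpha(\widehat H_M(\theta))$ (and analogously $\breve c_\alpha^M$): it controls the deterministic bias $\breve v_\alpha^M-v_\alpha$ by Taylor-expanding $\widetilde F_M$ around $F$ (Lemmas~\ref{lem.b.2}--\ref{lem.b.3}), and it controls the sampling error $\widetilde v_\alpha^{N,M}-\breve v_\alpha^M$ \emph{uniformly in $M$} via the probability integral transform $U=\widetilde F_M(\widehat H_M(\theta))\sim\mathrm{Unif}[0,1]$, which removes the $M$-dependence and reduces to classical Bahadur-type bounds on uniform order statistics (Lemma~\ref{lem.b.4}). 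Your approach instead sandwiches $\widehat F_{N,M}$ by shifted copies of the \emph{exact} empirical c.d.f.\ $G_N$ plus the remainder $R_{N,M}(\delta)$, and gets uniformity in $M$ by dominating the remainder with the i.i.d.\ maximal variables $S_i(\delta,M_0)$ (and, for CVaR, $T_i(M_0)$ with the reverse-martingale $L^2$ maximal inequality). Two remarks on the trade-off. First, the paper's decomposition produces explicit rates---$O(1/M)$ for the bias and $O_{a.s.}(N^{-3/4}(\log N)^{3/4})$ uniformly in $M$ for the sampling error---which are exactly what is recycled in Theorem~\ref{thm.3.4} to obtain the necessary and sufficient condition $N=o(M^2)$; your argument gives only convergence, so it cannot feed into that later result. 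Second, and more interestingly, your proof never actually uses the $O(1/M)$ density convergence you invoke in the opening paragraph: the sandwich and maximal-inequality steps use only the finite conditional second moment (Assumption~\ref{asm.3.2}.(i)) together with continuity and positivity of $f$ at $v_\alpha$---i.e., the \emph{Weak} Assumption~\ref{asm.3.1}. So your route in fact delivers the simultaneous limit under weaker hypotheses than stated, at the price of losing the quantitative information the paper needs downstream.
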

\begin{proof}
See Appendix \ref{ap.b}.
\end{proof}

Theorem \ref{thm.3.2} implies $\widetilde{v}_{\alpha}$ and $\widetilde{c}_{\alpha}$ converge to ${v}_{\alpha}$ and ${c}_{\alpha}$ w.p.1, respectively, when $N$ and $M$ go to infinity simultaneously.
The intuition is as follows.
For an arbitrary $M$, let us bound the difference between $v_\alpha (\widehat{H}_M(\theta))$ (or $c_\alpha (\widehat{H}_M(\theta))$) and $v_\alpha (H(\theta))$ (or $c_\alpha (H(\theta))$), where note that $v_\alpha (\widehat{H}_M(\theta))$ is VaR of $\widehat{H}_M(\theta)$ and $c_\alpha (\widehat{H}_M(\theta))$ is CVaR of $\widehat{H}_M(\theta)$.
As mentioned previously, Assumption \ref{asm.3.2} ensures that the difference between $\widetilde{f}_M(\cdot)$ and $f(\cdot)$ is of the order $O(\frac{1}{M})$. It follows that the difference between $v_\alpha (\widehat{H}_M(\theta))$ and $v_\alpha (H(\theta))$ is also of the order $O(\frac{1}{M})$. Furthermore, note that $\widetilde{v}_\alpha$  could be regarded as a one-layer estimator of $v_\alpha (\widehat{H}_M(\theta))$, i.e.,
\begin{eqnarray*}
\widetilde{v}_\alpha (H(\theta))=\widehat{v}_\alpha (\widehat{H}_M(\theta)).
\end{eqnarray*}
Under Assumption \ref{asm.3.2}, we could show that $\widehat{v}_\alpha (\widehat{H}_M(\theta))$, i.e., $\widetilde{v}_\alpha (H(\theta))$, converges to $v_\alpha (\widehat{H}_M(\theta))$  w.p.1 uniformly for all $M$ as $N\rightarrow \infty$. Therefore, $\widetilde{v}_\alpha (H(\theta))$ converges to ${v}_\alpha (H(\theta))$ w.p.1 as $N$ and $M$ go to infinity simultaneously. Hence, Theorem \ref{thm.3.2} holds.
\color{black}
\begin{remark}
	 Given the same set of assumptions,  Propositions 2 and 3 in \cite{gordy2010nested} suffice the proof of Theorem 3.4, but we can hardly derive the normality result from them. However, the lemma \ref{lem.b.1}-\ref{lem.b.4} used in our proof can directly lead to Theorem \ref{thm.3.4}. In this respect, our proof has its own value.
\end{remark}
\color{black}
\subsection{Asymptotic Normality and Confidence Intervals}
\label{Sec3.sub2:Normality}
\color{black}
After showing the consistency of $\widetilde{v}_{\alpha}$ and $\widetilde{c}_{\alpha}$, it is natural to consider their asymptotic normality properties and construct the associated CIs.
\textcolor{black}{Since we need iterative limits in  the  consistency result under  ``Weak Assumption", it can hardly be used in practice. So we will only consider  the case under ``Strong Assumption". Following the idea in proving Theorem \ref{thm.3.2}, the error of $\widetilde{v}_\alpha$ (or $\widetilde{c}_\alpha$) is decomposed into two components that respectively account for the one-layer simulation error due to input uncertainty and the simulation bias due to stochastic uncertainty. In particular,}
\begin{equation}\label{eq.3.12}
\color{black}
\widetilde{v}^{N,M}_\alpha -v_{\alpha}=
\left(\widetilde{v}^{N,M}_\alpha-\breve{v}^{M}_\alpha\right)+
\left(\breve{v}^{M}_\alpha -v_{\alpha}\right):=
Err_1+Bias_1
\end{equation}
\textcolor{black}{and}
\begin{equation}\label{eq.3.13}
\color{black}
\widetilde{c}^{N,M}_\alpha -c_{\alpha}=
\left(\widetilde{c}^{N,M}_\alpha-\breve{c}^{M}_\alpha\right)+
\left(\breve{c}^{M}_\alpha -c_{\alpha}\right):=
Err_2+Bias_2.
\end{equation}
\textcolor{black}{where  $\breve{v}_\alpha^{M}=v_\alpha(\widehat{H}_M(\theta))$ and $\breve{c}_\alpha^M=c_\alpha(\widehat{H}_M(\theta)).$ Here $Err_1$ (or $Err_2$) is caused by input uncertainty, and $Bias_1$ (or $Bias_2$) is caused by stochastic uncertainty. By properly choosing $N$ and $M$, we can control the bias. Specifically, we have the following Theorem \ref{thm.3.4} on the asymptotic normality of $\widetilde{v}_\alpha$ and $\widetilde{c}_\alpha$.}

\begin{theorem}\label{thm.3.4}
\textbf{Normality under Strong Assumption}.

Define a function $$\Lambda(t)=1/2 f(t)\mathbb{E}[\tau_\theta^2|H(\theta)=t].$$

Under Assumption \ref{asm.3.2}, the existence of the limit $K^2=\lim\limits_{N, M\rightarrow \infty}N/M^2$ is a sufficient and necessary condition for
\begin{equation}\label{eq.3.24}
\lim\limits_{N,M\rightarrow \infty}\sqrt{N}
\left(\widetilde{v}_{\alpha}-v_{\alpha}\right)
\overset{\mathcal{D}}=\sigma_{v}\mathcal{N}(0,1)+|K|\mu_v
~~\mbox{and}~
\lim\limits_{N,M\rightarrow \infty}\sqrt{N}
\left(\widetilde{c}_{\alpha}-c_{\alpha}\right)
\overset{\mathcal{D}}=\sigma_{c}\mathcal{N}(0,1)+|K|\mu_c,
\end{equation}
where $\sigma_{v} :=\sqrt{\alpha(1-\alpha)}/{f(v_\alpha)},$
 $\sigma_{c} :={\sqrt{Var[\left(H(\theta)
-v_{\alpha}\right)^{+}]}}/{(1-\alpha)}$, $\mu_v=\frac{-\Lambda^\prime(v_{\alpha})}
{f(v_{\alpha})}$ and $\mu_c=\frac{\Lambda(v_{\alpha})}
{(1-\alpha)}$.
\end{theorem}


\begin{proof}
See Appendix \ref{ap.d}.
\end{proof}

Theorem \ref{thm.3.4} is consistent with the results in \cite{gordy2010nested} on the characterizations of the asymptotic variances of $\widetilde{v}_\alpha$ and $\widetilde{c}_\alpha$.
We also note that Theorem \ref{thm.3.4} is stronger in that it directly leads to the results in \cite{gordy2010nested}.
Specifically, by minimizing MSE, \cite{gordy2010nested} shows that the variance and the bias of a nested risk estimator are balanced when the sample size pair $(N,M)$ lives in the regime of $N=O(M^2)$.
Theorem \ref{thm.3.4} here shows a stronger result that a nested risk estimator is asymptotically normally distributed if and only if $(N,M)$ lives in the regime of $N=|K|M^2$.

Following Theorem \ref{thm.3.4}, we can construct CIs for $v_{\alpha}$ and $c_{\alpha}$ of confidence level $(1-\beta)$:
\begin{eqnarray}\label{eq.3.25}
\left[\widetilde{v}_{\alpha}+
\frac{t_{\beta/2,N-1}\widehat{\sigma}_{v}}{\sqrt{N}}-\frac{\widehat{\mu}_v}
{M},~~ \widetilde{v}_{\alpha}+
\frac{t_{1-\beta/2,N-1}\widehat{\sigma}_{v}}{\sqrt{N}}-\frac{\widehat{\mu}_v}
{M}\right]
\end{eqnarray}
and
\begin{eqnarray}\label{eq.3.26}
\left[
\widetilde{c}_{\alpha}+
\frac{t_{\beta/2,N-1}\widehat{\sigma}_{c}}{\sqrt{N}}-\frac{\widehat{\mu}_c}{M},~~
\widetilde{c}_{\alpha}+
\frac{t_{1-\beta/2,N-1}\widehat{\sigma}_{c}}{\sqrt{N}}-\frac{\widehat{\mu}_c}
{M}\right],
\end{eqnarray}
where $\widehat{\sigma}_{v}$, $\widehat{\mu}_{v}$, $\widehat{\sigma}_{c}$ and $\widehat{\mu}_{c}$ are sample estimates of 
$\sigma_{v}$, ${\mu}_{v}$, ${\sigma}_{c}$ and ${\mu}_{c},$ respectively, and $t_{\gamma,L}$ represents the $\gamma$-quantile of a t-distribution with degree of freedom $L$.  The term ${f(v_\alpha)}$ in $\sigma_{v}$  can be estimated using Gaussian kernel density estimation (\cite{steckley2006}), and $\sigma_{c}$ can be estimated directly via sample average. 
The estimation of  ${\mu}_{v}$ and ${\mu}_{c}$ is more tricky as they involve expectation and gradient terms that could not be estimated directly. We will leave the detailed discussion on their estimation to Section 4. 
\begin{remark}
It is worth mentioning that we could probably avoid estimating $f$ directly by using more ``data driving" schemes such as sectioning and bootstrapping. This is an interesting future direction but beyond the scope of this paper. At the same time, the method we choose is easy to use and works quite well in the numerical experiments (see Section 5).
\end{remark}

Note that the CI in \eqref{eq.3.25} or \eqref{eq.3.26} only depends on $N$, when $N=o(M^2)$. In this case the bias term due to stochastic uncertainty 
is of the order $O(\frac{1}{M})$, and thus it will be asymptotically insignificant compared with the $O(\frac{1}{\sqrt{N}})$ error term. We refer to the CIs in \eqref{eq.3.25} and \eqref{eq.3.26} as ``CIs under Strong Assumption''.\\

The following Theorem \ref{thm.3.5} shows that CIs under Strong Assumption are asymptotically valid which means they will achieve a coverage probability of $(1-\beta)$.

\begin{theorem}\label{thm.3.5}
\textbf{Asymptotic Validity of CIs}.

Under Assumption \ref{asm.3.2}, the CIs defined in (\ref{eq.3.25}) and (\ref{eq.3.26}) are asymptotically valid when $N=o(M^2)$ exists, i.e.,
\begin{equation*}
\lim_{N,M\rightarrow \infty} Pr\{lb^s_{v}\le v_{\alpha}\le ub^s_{v}\}= 1-\beta ~~ \mbox{and} ~~
\lim_{N,M\rightarrow \infty} P\{lb^s_{c}\le c_{\alpha}\le ub^s_{c}\}= 1-\beta,
\end{equation*}
where $lb^s_{v}$ and $ub^s_{v}$ denote the lower and upper boundaries of the CI in (\ref{eq.3.25}), and $lb^s_{c}$ and $ub^s_{c}$ denote the lower and upper boundaries of the CI in (\ref{eq.3.26}), respectively.

\end{theorem}
\begin{proof}
See Appendix \ref{ap.e}.
\end{proof}

\begin{remark}
Here we only consider the case 	when $N=o(M^2)$ (or equivalently $K=0$) instead of $K\geq0$. This is because when $K>0$, it is difficult to guarantee that the estimate of the bias term is sufficiently accurate when $N$ and $M$ go to infinity, since we use a cubic basis function in the regression. 
\end{remark}

\color{black}
\section{Budget Allocation}\label{Sec4:Budget}

In practical simulation, usually there is a simulation budget that affects the choices of $N$ and $M$.
Intuitively, the outer sample size $N$ determines the simulation error due to input uncertainty, while the inner sample size $M$ determines the simulation error due to stochastic uncertainty.
Therefore, choosing $N$ and $M$ appropriately is critical to balance the trade-off between capturing input uncertainty and capturing stochastic uncertainty, and improve overall efficiency.

As shown in previous section, under Strong Assumption, the error of nested risk estimator $\widetilde{v}_\alpha$ (or $\widetilde{c}_\alpha$) could be decomposed into an error component caused by input uncertainty and a bias component caused by stochastic uncertainty.
Within this framework, \cite{gordy2010nested} proposes to minimize the asymptotic MSE, i.e., the summation of variance and squared bias, of $\widetilde{v}_\alpha$. The result is an (asymptotically) optimal budget allocation scheme, $N=O(M^2)$, that balances between the outer-layer sampling error and the inner-layer sampling bias.

\color{black}
An alternative approach to improving simulation efficiency is to formulate the optimal budget allocation problem by minimizing the width of the CI, noting that the CI in \eqref{eq.3.25} or \eqref{eq.3.26} are not centered at $\widetilde{v}_{\alpha}$ (or $\widetilde{c}_{\alpha}$). It may not even include $\widetilde{v}_{\alpha}$ (or $\widetilde{c}_{\alpha}$) when the bias dominates the standard deviation. So instead of minimizing the half width, one could minimize the largest possible difference between the true value ${v}_\alpha$ (or ${c}_\alpha$) and the point estimate $\widetilde{v}_\alpha$ (or $\widetilde {c}_\alpha$) under the specified high probability (such as $95\%$). This difference turns out to be the wider half of the CI in \eqref{eq.3.25} or \eqref{eq.3.26} and can be written as:
\begin{equation}\label{eq.4.1}
W_{v}(N,M)\overset{\triangle}= \frac{t_{1-\beta/2,N-1}\widehat{\sigma}_{v}}{\sqrt{N}}+|\frac{\widehat{\mu}_v}
{M}|,
\end{equation}
and
\begin{equation}\label{eq.4.2}
W_{c}(N,M)\overset{\triangle}= \frac{t_{1-\beta/2,N-1}\widehat{\sigma}_{c}}{\sqrt{N}}+|\frac{\widehat{\mu}_c}{M}|.
\end{equation}
 For simplicity, we will refer to $W_{v}$ (or $W_{c}$) as the ``wider-half CI width".

\color{black}
The budget allocation problem can be formulated as follows.  Let $C(N,M):=c_1 N+c_2 NM$ be the total computational cost, where $c_1$ is the cost for simulating one input parameter scenario, and $c_2$ is the cost for simulating one response sample.
Of course, there could be other minimization criteria such as the overall computational complexity, and they can be minimized in a similar manner. Let $CB$ be the total simulation budget.
Consider the following CI  width minimization problem

\begin{equation}\label{eq.4.3}
\begin{aligned}
&\min\limits_{N, M} && W_{v}(N,M) && \mbox{or}\;\; && \min\limits_{N, M} && W_{c}(N,M) \\
&s.t. && C(N,M)\le CB  &&  && s.t. && C(N,M)\le CB\\
&      &&  N\ge \Gamma_0,\; M\ge \Gamma_0 && && && N\ge \Gamma_0,\;
M\ge \Gamma_0,\; (1-\alpha)NM\ge \Gamma_0 \\
&      && N, M \in \mathbb{Z}^{+} && && && N, M \in \mathbb{Z}^{+}
\end{aligned}.
\end{equation}
Here the constraints $N\ge \Gamma_0$, $M\ge\Gamma_0$ and $(1-\alpha) NM\ge \Gamma_0$ are imposed to ensure the validity of a $t$-statistics, and a typical choice for $\Gamma_0$ is 30.

Before solving problem (\ref{eq.4.3}), we still need to compute or estimate the ``variance terms'' and ``bias terms'' $\sigma_{v}$, $\mu_{v}$, $\sigma_{c}$, and $\mu_{c}$ in the objective function, since in practice they are usually unknown or unavailable. A common fix is to run a pilot experiment with a small fraction of total simulation budget, and estimate the variance terms using the samples from the pilot experiment.
Let us use $\widetilde{\sigma}_{v}$, $\widetilde{\mu}_{v}$, $\widetilde{\sigma}_{c}$ and $\widetilde{\mu}_{c}$ to denote the estimates of $\sigma_{v}$, $\mu_{v}$, $\sigma_{c}$ and $\mu_{c}$ from the pilot experiment, respectively.
$\widetilde{\sigma}_{v}$ and $\widetilde{\sigma}_{c}$ could be the natural sample average estimates; however, they might be very inaccurate since it involves rare-event simulation with few samples.
For example, recall that
\begin{equation*}
\sigma_{c}^2=\frac{Var\left[\left(H(\theta)
-v_{\alpha}\right)^{+}\right]}{(1-\alpha)^2}
=\frac{1}{(1-\alpha)^2}\left\{\mathbb{E}\left[\left(\left(H(\theta)-
v_{\alpha}\right)^+\right)^2\right]-
\left(\mathbb{E}\left[\left(H(\theta)
-v_{\alpha}\right)^{+}\right]\right)^2\right\}.
\end{equation*}
This indicates that estimation of $\sigma_{c}^2$ is at least as difficult as estimation of $v_\alpha$. Using naive sample average to estimate $\sigma_{c}$ causes most of the samples to be ineffective, and thus resulting in an inaccurate estimate $\widetilde{\sigma}_{c}$.
In fact, theoretically only $(1-\alpha)$ fraction of the samples will be effective; since $\alpha$ is close to 1, the percentage of effective samples is small.
To be more specific, suppose $\alpha=0.99$ and $N=100$ scenarios of $H(\theta)$ are generated in the pilot experiment.
Then theoretically only one scenario will be effective and used in the estimation, since the rest $99$ scenarios result in a simple value of $0$.

The issue with naive sample average method is that the information about the underlying distribution carried by the ineffective samples is not utilized.
In contrast, a good estimation method usually makes use of the information carried by all the samples.
For example, using (adaptive) importance sampling turns some of the ineffective samples into effective samples,
and thus improves accuracy; however, this approach is not readily applicable here because we lack the knowledge about the p.d.f. of the mean response distribution.

Next, we will propose a new approach to estimating the variance terms that exploits the information carried by all the samples generated in the pilot experiment. Recall that
$$
\sigma^2_{v}=\alpha(1-\alpha)/f^2(v_\alpha), \quad \mu_v=\frac{-\Lambda^\prime(v_{\alpha})}
{f(v_{\alpha})},
$$
and
$$
\sigma^2_{c}=Var\left[\left(H(\theta)
-v_{\alpha}\right)^{+}\right]/(1-\alpha)^2,\quad
\mu_c=\frac{\Lambda(v_{\alpha})}
{(1-\alpha)},
$$
where $$\Lambda(t)=1/2 f(t)\mathbb{E}[\tau_\theta^2|H(\theta)=t].$$The challenges are three-fold: (i) the lack of an explicit formula for $f(\cdot)$; (ii) the lack of a functional representation for $\tau^2(\cdot)$ in the function $\Lambda(t)$, where $\tau^2(y):=\mathbb{E}[\tau^2_\theta|H(\theta)=y]$; (iii) the lack of the gradient of $\Lambda(t)$.

To address the first challenge, we apply a technique called ``density projection''.
That is, we project the discrete empirical distribution of $H(\theta)$ onto a parameterized family of continuous densities.
Then the resultant projection, which is a continuous density, will be used as an approximation of $f(\cdot)$, and $\widetilde{\sigma}_{v}$ and $\widetilde{\sigma}_{c}$ are computed via numerical integration.
The detailed description of density projection is as follows.

A \emph{projection mapping} from a space of probability distributions $\mathcal{P}$ to another space consisting of a parameterized family of densities $\mathcal{F}$, denoted as
$Proj_\mathcal{F}: \mathcal{P} \rightarrow \mathcal{F}$, is defined by \begin{equation}\label{eq.4.5}
Proj_{\mathcal{F}}(g)\overset{\triangle}=\arg \min\limits_{f\in \mathcal{F}} D_{KL}(g\parallel f), \quad \forall g\in \mathcal{P},
\end{equation}
where $D_{KL}(g\parallel f)$ denotes the \emph{Kullback-Leibler (KL) divergence} between $g$ and $f$, which is
\begin{equation*}
D_{KL}(g\parallel f)\overset{\triangle}=\int g(x)\log \frac{g(x)}{f(x)}dx.
\end{equation*}
Here note that the densities $g$ and $f$ are assumed to have the same support.
Hence, the projection of $g$ on $\mathcal{F}$ has the minimum KL divergence from $g$ among all densities in $\mathcal{F}$.
Loosely speaking, the projection of $g$ on $\mathcal{F}$ is the best approximation of $g$ one can find in $\mathcal{F}$.
When $\mathcal{F}$ is an exponential family of densities, which includes common families of densities such as Gaussian, the minimization problem (\ref{eq.4.5}) has an analytical solution.
Note that this technique utilizes the information carried by all the samples.

\begin{remark}\label{rem.4.1}
If i.i.d. samples of $g$ are generated to compute  $Proj_{\mathcal{F}}(g)$, then the proposed density projection technique is equivalent to maximum likelihood estimation. Furthermore, if $\mathcal{F}$ is an exponential family of densities with sufficient statistics that consist of polynomials, then density projection is equivalent to method of moments.
\end{remark}

To address the second challenge, we apply regression for $\tau^2(y)$ onto the space of $H(\theta)$, and use the samples from the pilot experiment to train the regression model.
Simple numerical tests show that a polynomial regression with basis functions consisting of polynomials (degree$\le 3$) of $H(\theta)$ is sufficiently good.

\color{black}
The third challenge is resolved naturally to this end because we have the closed form of $\Lambda(t)$ as a function of $t$. In particular, $f(t)$ is now a normal probability density function, and $\tau^2(y)$ is a polynomial function. Thus, we can compute the gradient analytically.
\color{black}

After plugging the approximate  terms $\widetilde{\sigma}_{v}$, $\widetilde{\mu}_{v}$, $\widetilde{\sigma}_{c}$ and $\widetilde{\mu}_{c}$ into problem (\ref{eq.4.3}), it remains to solve the minimization problem.
Solving it analytically to optimality is unlikely because the problem might not possess structural properties such as convexity. \textcolor{black}{In particular, the first constraint $C(N,M)\leq CB$ is concave.}
Alternatively, we can enumerate a reasonable amount of candidate allocation schemes (e.g., a two-dimensional grid of feasible allocation schemes),
and choose a scheme that yields the smallest CI width.

We also point out that it is beneficial to consider a more sophisticated budget allocation scheme in which the inner sample size varies across different input (parameter) scenarios.
For example, in the estimation of $v_\alpha$, the input scenarios that heavily affect estimation accuracy are the ones with mean responses close to $v_\alpha$.
In particular, for a specific input scenario, it affects estimation accuracy if the true mean response of that input scenario falls into one side of $v_\alpha$ while its estimation falls into the other side.
In this case, the inner sample size for this input scenario should be increased to reduce the probability of such event.
This problem has been studied in the setting of nested credit risk assessment using ranking and selection (\cite{broadie2011efficient}) and screening (\cite{lan2010confidence}), etc.

\section{Numerical Experiments}\label{Sec5:Numerical}

\subsection{CIs under Strong Assumption}\label{Sec5.subsec1:Comparison}

We first use a simple numerical example from \cite{gordy2010nested} to show the validity of our CI procedures under Strong Assumption.
In particular, consider $H(\theta;\xi)=\mathcal{N}(0,1)+\mathcal{N}(0,1)$, a summation of two independent standard normal random variables.
In \cite{gordy2010nested}, the first $\mathcal{N}(0,1)$ represents the (outer-layer) portfolio loss distribution and the second $\mathcal{N}(0,1)$ represents the (inner-layer) pricing error. Clearly, this example does not fit into our input uncertainty framework.
The reason for using it is that the exact risk values, and all variance and bias parameters admit closed-form expressions.
Thus, strong CI procedure are precise.\color{black}

Performance measures of interest include wider-half CI widths and actual coverage probability, i.e., the probability that the true risk value falls into the simulated CI.
In particular, we will run the simulation 1000 times independently and identically to compute the two performance measures, in which the budget allocation scheme from minimizing the wider-half CI widths in previous section is employed.
The results for VaR and CVaR are summarized in Table \ref{table.5.1}.

\begin{table}[htb]
\color{black}
\tbl{Strong CI Procedures in VaR and CVaR Estimation.\label{table.5.1}}
{\begin{tabular}{|c|cccc|cccc|}
\hline
 & \multicolumn{4}{c|}{\textbf{VaR}} & \multicolumn{4}{c|}{\textbf{CVaR }}\\
\hline
$C(N, M)$     &$N_w$  &$M_w$     &Wider Half  &Coverage  &$N_s$   &$M_s$  &Wider Half    &Coverage     \\
&        &       & CI  Width   &Probability   &       &       &  CI Width&Probability  \\
\hline
$10^4$ &$865$  &$12$   &$0.2096$ &$93.9\%$   &$824$    &$13$  &$0.2477$    &$94.1\%$  \\
\hline
$10^5$ &$4015$  &$25$  &$0.0983$ &$94.5\%$  &$3826$   &$27$  &$0.1163$    &$94.4\%$   \\
\hline
$10^6$ &$18634$ &$54$  &$0.0456$ &$95.1\%$  &$17758$  &$57$  &$0.0544$    &$95.1\%$   \\
\hline
$10^7$ &$86491$ &$116$ &$0.0212$ &$95.5\%$  &$82429$  &$122$  &$0.02533$   &$95.2\%$   \\
\hline
\end{tabular}}
\begin{tabnote}
\Note{Note:}{The risk level of interest $\alpha=0.95$, the target confidence level $(1-\beta)=0.95$, and the total simulation cost $C(N, M)= NM+N$. The pair $(N_w, M_w)$ is the budget allocation obtained by minimizing the wider-half CI width. The coverage probabilities are obtained via $1000$ independent and identical runs of simulation. }
\end{tabnote}%
\end{table}

The numerical results show that: 1) As expected, strong CI procedure generates CIs with coverage probabilities around 95\%. 
2) When we increase the total budget, the coverage will increase while the wider-half CI width will decrease. Thus, the strong procedure can provide us a very good estimation (high coverage, small CI) if we have large budget.
3) the optimal budget allocations for VaR and CVaR are almost the same, which means we can minimize the two wider-half CI widths at the same time.

\subsection{Sharing Economy Model}\label{Sec5.subsec2:MM1}

Let us consider another example for risk quantification under input uncertainty--- a sharing economy model. This new type of economy refers to businesses such as local delivery, car sharing and house sharing. We can model it as a two-sided market in which two distinct user groups, buyers and sellers, trade with each other under the regulation made by the organizer/agent. In general the organizer is responsible for pricing, with the aim to maximize the revenue or clear the marketplace. Usually, both demand and supply depend on the price. Specifically, when the price is higher, more sellers will come while more buyers will be lost, and vice versa. We can further model this as a queueing system (see \cite{banerjee2015pricing}) and make the following assumptions:
\begin{enumerate}
     \item Buyers and sellers' arrivals follow Poisson processes.
     \item Buyers and sellers are homogeneous in terms of price sensitivity. 
    \item Buyers and sellers all follow the First Come First Serve rule.
	\item Every buyer only needs one product, and at the same time each seller only has one product available to sell.
	\item Lost Order: If a buyer enters the market and finds no sellers left, the buyer will exit the market immediately.
\end{enumerate}  
Denote the basic arrival rates of buyers and sellers by $\Lambda_o$ and $M_o$, respectively, and $\theta_o :=(\Lambda_o,M_o)$.
Let $p$ be the price. In general, when a buyer (seller) enters the market, after learning the price, the buyer (seller) will buy/sell the product (if available) with probability $f(p)$ ($g(p)$ for sellers). Note that $f(p)$ and $g(p)$ could be viewed as the functions that describe users' sensitivities to the price. One of the common choices for $f(p)$ and $g(p)$ is
\begin{align}
f(p)&=\frac{ exp(-\alpha p)}{1+exp(-\alpha p)}, \\
g(p)&=\frac{1}{1+exp(-\beta p)}
\end{align}
where $\alpha$ and $\beta$ are the price sensitivity coefficients. Since in general buyers are more sensitive to the price change, we also assume $\alpha>\beta$. Notice that $f(p)$ is strictly decreasing, and $g(p)$ is a strictly increasing function. 
Thanks to the nice properties of general Poisson processes, the actual arrivals of buyers and sellers (meaning the ones that do offer to buy or sell) under price $p$ also follow Poisson processes with rates $\lambda(p)$ and $\mu(p)$, respectively, where
\begin{align}
	\lambda(p) &=\Lambda_of(p)\\
	\mu(p) &=M_o g(p).
\end{align}

The response of interest for the organizer is the unfulfilled rate $H(\theta_o,p)$ (i.e the probability of an order from buyers being lost) because it measures to what extent the marketplace is cleared while managing the service rate and the queue length of the sellers. In particular, the organizer could control the unfulfilled rate $H(\theta_o,p)$ by adjusting the price dynamically.

\begin{remark}\label{remark.5.1}
Usually, the organizer will not try to find a price that minimizes $H(\theta_o,p)$. Intuitively, when the price is very high, there will be more sellers than buyers in the market. In this case, the unfulfilled rate is close to 0. However, this scenario might not be of the best interest to the organizer since the organizer's objective might be maximizing the revenue or social welfare (which is very common in a two-sided market). Another example is that the organizer might want to maximize the expected number of fulfilled orders while controlling the unfulfilled rate. That is, we need to solve the following stochastic optimization problem:
\begin{align*}
	\max_p \quad &E[N(\theta_o,p)(1-H(\theta_o,p))]\\
	s.t. \quad & H(\theta_o,p)\leq0.05 ,\\
	&p>0,
\end{align*}
where $N(\theta_o,p)$ is the total number of buyers entering the market. This problem could be solved by simulation optimization methods, which is beyond the scope of this paper.
\end{remark}

Here we focus on estimating the unfulfilled rates under different prices. The challenge is that we do not know the basic arrival rates $\theta_o$ exactly. In practice, we have to collect data to estimate them, and hence input uncertainty plays a crucial role. Our objective is to estimate the risk associated with unfulfilled rate due to input uncertainty.

In this numerical experiment,  the values of $\Lambda_o$ and $M_o$ are known to us (the judges) but not known to the experimenter. Take $\Lambda_o=5$, $M_o=2$, $\alpha=0.2$, $\beta=0.1$ and $p=2,3,4,5$.
To model input uncertainty, we take a Bayesian approach to construct the belief distribution on input parameters---the basic Poisson arrival rates $\Lambda_o$ and $M_o$.
Specifically, assume non-informative priors for both $\Lambda_o$ and $M_o$, i.e., $p_o(\Lambda_o)\propto 1/\Lambda_o$ and $p_o(M_o)\propto 1/M_o$. Based on $n=10,100,10000$ historical observations of $\Lambda_o$ and $M_o$ (drawn from the corresponding distributions with the true parameters), a Bayesian updating is applied to obtain the posterior distributions of $\Lambda_o$ and $M_o$.
In particular, denote the historical observations of $\Lambda_o$ by $\mathbf{x}=(x_1,...,x_n)$. Then the updating on the posterior distribution of $\Lambda_o$ is carried out analytically and leads to  $p(\Lambda_o|\mathbf{x})=\Lambda_o^{n-1}\exp{(-\Lambda_o\sum_{i=1}^{n}x_i)}$, which is a Gamma distribution with shape parameter $n$ and scale parameter $1/(\sum_{i=1}^{n}x_i)$.
Similarly, let $\mathbf{y}=(y_1,...,y_n)$ be the historical observations of $M_o$.
Then the posterior distribution of $M_o$ is $p(M_o|\mathbf{y})=M_o^{n-1}\exp{(-M_o\sum_{i=1}^{n}y_i)}$---a Gamma distribution with shape parameter $n$ and scale parameter $1/(\sum_{i=1}^{n}y_i)$.
The objective is to estimate $v_\alpha$ and $c_\alpha$ ($\alpha=0.90, 0.95, 0.99$) of unfulfilled rate  w.r.t. the posterior parameter distributions $p(\Lambda_o|\mathbf{x})$ and $p(M_o|\mathbf{y})$, and construct the associated $100(1-\beta)\%$ CIs ($\beta=0.05$).

Before we apply the strong CI procedure, first we need verify whether the Strong Assumption holds. In this example, the parameter $\theta_o =(\Lambda_o,M_o)$. What we want to estimate is a probability, so here $h(\theta_o,\xi)$ is a Bernoulli random variable with the probability of success equals to $H(\theta_o,p).$ Thus it has a finite conditional second moment. In addition, $\bar{\mathcal{E}}_M$ can be written as $B(M,h))/\sqrt{M}-\sqrt{M}h,$ where $B(\cdot,\cdot)$ is a Binomial random variable. Notice that in practice we usually choose $M$ large enough to make sure the estimate is accurate. Hence, we can use the normal approximation $\mathcal{N}(Mh,Mh(1-h))$ here to replace the Binomial distribution.  We denote this conditional probability density function as $f_{M,h}(e)$ whose second derivative exists and is continuous.  Then the joint density $p_M(h,e)$ is the product of $f_{M,h}(e)$ and $f(h).$ The exponential term in the normal distribution will ensure the second and third items in the Strong Assumption are satisfied.

In particular, we draw $N=10000$ input parameter scenarios from $p(\Lambda_o|\mathbf{x})$ and $p(M_o|\mathbf{y})$.
Furthermore, for each input parameter scenario, we draw $M=2000$ samples of buyer's arrival and count how many of their orders are lost.
Finally, $v_\alpha$ and $c_\alpha$ of the unfulfilled rate are estimated via (\ref{eq.3.5}) and (\ref{eq.3.6}), respectively.
The simulation results are summarized in Table \ref{table.5.2} and \ref{table.5.3}.
\begin{table}[htb]
\color{black}
\tbl{VaR (with 95$\%$ CI) of Unfulfilled Rate.\label{table.5.2}}
{\begin{tabular}{ccllll}
\hline
$p$  &  $n$  & Mean        & $VaR_{\alpha_1}$     &$VaR_{\alpha_2}$  &$VaR_{\alpha_3}$ \\
     &            &Half Width     &wider-half CI width
   &wider-half CI width    &wider-half CI width  \\
\hline\hline
2 & 10    & 0.2109   & 0.5275   & 0.601    & 0.722    \\
  &       & 4.20$\times10^{-3}$ & 7.50$\times10^{-3}$ & 8.20$\times10^{-3}$ & 1.12$\times10^{-2}$ \\ \hline
2 & 100   & 0.4875   & 0.58     & 0.602    & 0.6425   \\
  &       & 1.50$\times10^{-3}$ & 3.10$\times10^{-3}$ & 3.70$\times10^{-3}$ & 5.80$\times10^{-3}$ \\ \hline
2 & 10000 & 0.4598   & 0.4875   & 0.4955   & 0.509    \\
  &       & 4.70$\times10^{-4}$ & 4.40$\times10^{-3}$ & 5.60$\times10^{-3}$ & 8.30$\times10^{-3}$ \\ \hline
3 & 10    & 0.6743   & 0.836    & 0.862    & 0.8975   \\
  &       & 3.00$\times10^{-3}$ & 3.00$\times10^{-3}$ & 3.10$\times10^{-3}$ & 2.90$\times10^{-3}$ \\ \hline
3 & 100   & 0.2207   & 0.359    & 0.3905   & 0.45     \\
  &       & 2.10$\times10^{-3}$ & 3.80$\times10^{-3}$ & 4.40$\times10^{-3}$ & 6.80$\times10^{-3}$ \\ \hline
3 & 10000 & 0.3508   & 0.3825   & 0.3915   & 0.408    \\
  &       & 5.30$\times10^{-4}$ & 3.80$\times10^{-3}$ & 4.90$\times10^{-3}$ & 7.30$\times10^{-3}$ \\ \hline
4 & 10    & 0.1201   & 0.401    & 0.4995   & 0.638    \\
  &       & 3.40$\times10^{-3}$ & 8.10$\times10^{-3}$ & 7.90$\times10^{-3}$ & 1.44$\times10^{-2}$ \\ \hline
4 & 100   & 0.3674   & 0.4805   & 0.508    & 0.558    \\
  &       & 1.80$\times10^{-3}$ & 3.50$\times10^{-3}$ & 4.20$\times10^{-3}$ & 5.90$\times10^{-3}$ \\ \hline
4 & 10000 & 0.235    & 0.2705   & 0.2805   & 0.3      \\
  &       & 5.80$\times10^{-4}$ & 3.00$\times10^{-3}$ & 4.00$\times10^{-3}$ & 6.20$\times10^{-3}$ \\ \hline
5 & 10    & 0.1371   & 0.4305   & 0.5175   & 0.65     \\
  &       & 3.60$\times10^{-3}$ & 7.30$\times10^{-3}$ & 9.20$\times10^{-3}$ & 1.31$\times10^{-2}$ \\ \hline
5 & 100   & 0.1128   & 0.252    & 0.2905   & 0.36     \\
  &       & 1.90$\times10^{-3}$ & 4.20$\times10^{-3}$ & 5.00$\times10^{-3}$ & 8.10$\times10^{-3}$ \\ \hline
5 & 10000 & 0.008    & 0.127    & 0.139    & 0.16     \\
  &       & 6.10$\times10^{-4}$ & 2.00$\times10^{-3}$ & 2.80$\times10^{-3}$ & 4.60$\times10^{-3}$\\
\hline

\end{tabular}}
\begin{tabnote}
\Note{Note:}{The experiment parameters are: $\Lambda_o=5$, $M_o=2$, $N=10000$, $M=2000$, $\alpha_1=0.90$, $\alpha_2=0.95$, and $\alpha_3=0.99$.}
\end{tabnote}%
\end{table}

\begin{table}[htb]
\color{black}
\tbl{CVaR (with 95$\%$ CI) of Unfulfilled Rate.\label{table.5.3}}
{\begin{tabular}{ccllll}
\hline
$p$  &  $n$  & Mean       & $CVaR_{\alpha_1}$     &$CVaR_{\alpha_2}$  &$CVaR_{\alpha_3}$ \\
     &            &Half CI Width     &wider-half CI width
   &wider-half CI width    &wider-half CI width$\times10^{-3}$  \\
\hline\hline
	2 & 10    & 0.2109   & 0.6181   & 0.6722   & 0.7611   \\
  &       & 4.20$\times10^{-3}$ & 7.30$\times10^{-3}$ & 8.40$\times10^{-3}$ & 1.06$\times10^{-2}$ \\ \hline
2 & 100   & 0.4875   & 0.6078   & 0.6253   & 0.6585   \\
  &       & 1.50$\times10^{-3}$ & 3.60$\times10^{-3}$ & 4.30$\times10^{-3}$ & 6.30$\times10^{-3}$ \\ \hline
2 & 10000 & 0.4598   & 0.4975   & 0.5039   & 0.517    \\
  &       & 4.70$\times10^{-4}$ & 5.90$\times10^{-3}$ & 7.00$\times10^{-3}$ & 9.80$\times10^{-3}$ \\ \hline
3 & 10    & 0.6743   & 0.8665   & 0.8845   & 0.9128   \\
  &       & 3.00$\times10^{-3}$ & 2.60$\times10^{-3}$ & 2.70$\times10^{-3}$ & 3.80$\times10^{-3}$ \\ \hline
3 & 100   & 0.2207   & 0.3999   & 0.4265   & 0.4747   \\
  &       & 2.10$\times10^{-3}$ & 4.20$\times10^{-3}$ & 5.20$\times10^{-3}$ & 8.10$\times10^{-3}$ \\ \hline
3 & 10000 & 0.3508   & 0.394    & 0.4014   & 0.4159   \\
  &       & 5.30$\times10^{-4}$ & 5.10$\times10^{-3}$ & 6.10$\times10^{-3}$ & 8.60$\times10^{-3}$ \\ \hline
4 & 10    & 0.1201   & 0.5167   & 0.583    & 0.6873   \\
  &       & 3.40$\times10^{-3}$ & 9.10$\times10^{-3}$ & 9.90$\times10^{-3}$ & 1.32$\times10^{-2}$ \\ \hline
4 & 100   & 0.3674   & 0.5163   & 0.5394   & 0.5795   \\
  &       & 1.80$\times10^{-3}$ & 4.00$\times10^{-3}$ & 4.90$\times10^{-3}$ & 7.50$\times10^{-3}$ \\ \hline
4 & 10000 & 0.235    & 0.2834   & 0.2918   & 0.3076   \\
  &       & 5.80$\times10^{-4}$ & 4.10$\times10^{-3}$ & 5.00$\times10^{-3}$ & 7.20$\times10^{-3}$ \\ \hline
5 & 10    & 0.1371   & 0.5336   & 0.5982   & 0.6991   \\
  &       & 3.60$\times10^{-3}$ & 8.40$\times10^{-3}$ & 9.60$\times10^{-3}$ & 1.35$\times10^{-2}$ \\ \hline
5 & 100   & 0.1128   & 0.3012   & 0.3328   & 0.3927   \\
  &       & 1.90$\times10^{-3}$ & 4.70$\times10^{-3}$ & 5.90$\times10^{-3}$ & 9.80$\times10^{-3}$ \\ \hline
5 & 10000 & 0.008    & 0.1419   & 0.1515   & 0.1692   \\
  &       & 6.10$\times10^{-4}$ & 2.80$\times10^{-3}$ & 3.50$\times10^{-3}$ & 5.50$\times10^{-3}$\\
  \hline
 \end{tabular}}
 \begin{tabnote}
\Note{Note:}{The experiment parameters are: $\Lambda_o=5$, $M_o=2$, $N=10000$, $M=2000$, $\alpha_1=0.90$, $\alpha_2=0.95$, and $\alpha_3=0.99$.}
\end{tabnote}%
\end{table}
We have the following observations:
\begin{itemize}
\item[(1)] In general, there is a significant gap between the mean (column 3)  and VaR or CVaR  (columns 4 to 6) of unfulfilled rate w.r.t. input uncertainty. It implies that risk quantification in stochastic simulation under input uncertainty is necessary. Moreover, when $n$ is really small (such as $n=10$), there is no clear pattern how the mean value changes when $p$ increases. This is because the error caused by input uncertainty is too large and overwhelms the estimation.

\item[(2)] When $n$ is large, the gap between the mean and VaR or CVaR becomes small.
Intuitively, as more input data become available, the belief distribution on input parameter becomes more concentrated on the values close to the true one. Therefore, loosely speaking, the mean response distribution is also more concentrated on the values close to the true mean response, and essentially reduces the risk of large unfulfilled rate.

\item[(3)] Under the same level of input uncertainty, especially when $n$ is small, we can see the gap between the mean and VaR or CVaR becomes more significant as $p$ increases. For example, when $n=100$, $VaR_{\alpha_1}$ is  only 1.2 times of mean for $p=2$ while this number is more than 3 for $p=5$. This is because when $\mu(p)$ approaches the buyers' arrival rate $\lambda(p)$, the system becomes less stable and the risk in simulation due to input uncertainty is more significant. Therefore, more input data is required to reduce such risk to an acceptable level.
\end{itemize}

In order to show how input uncertainty might affect the pricing scheme, we further study how VaR and CVaR estimates behave around the optimal price (about 5 dollars) under different levels of input uncertainty. In particular, we take three different input data sizes $n=100,\;1000,\;10000$.  For each $n$, we estimate the mean,  $\textit{VaR}_{0.95}$ and  $\textit{CVaR}_{0.95}$ and their $95\%$ CIs under the price range of  $4.8$ to $5.1$. The results are shown in Figure \ref{fig.1}, where lines with different colors show the trends of the mean, VaR and CVaR, and bars at different points represent the corresponding CIs. 
\begin{figure}[htb]
\centering
\begin{tabular}{c}
	\hspace{-5mm}
    \includegraphics[width=0.60\textwidth]{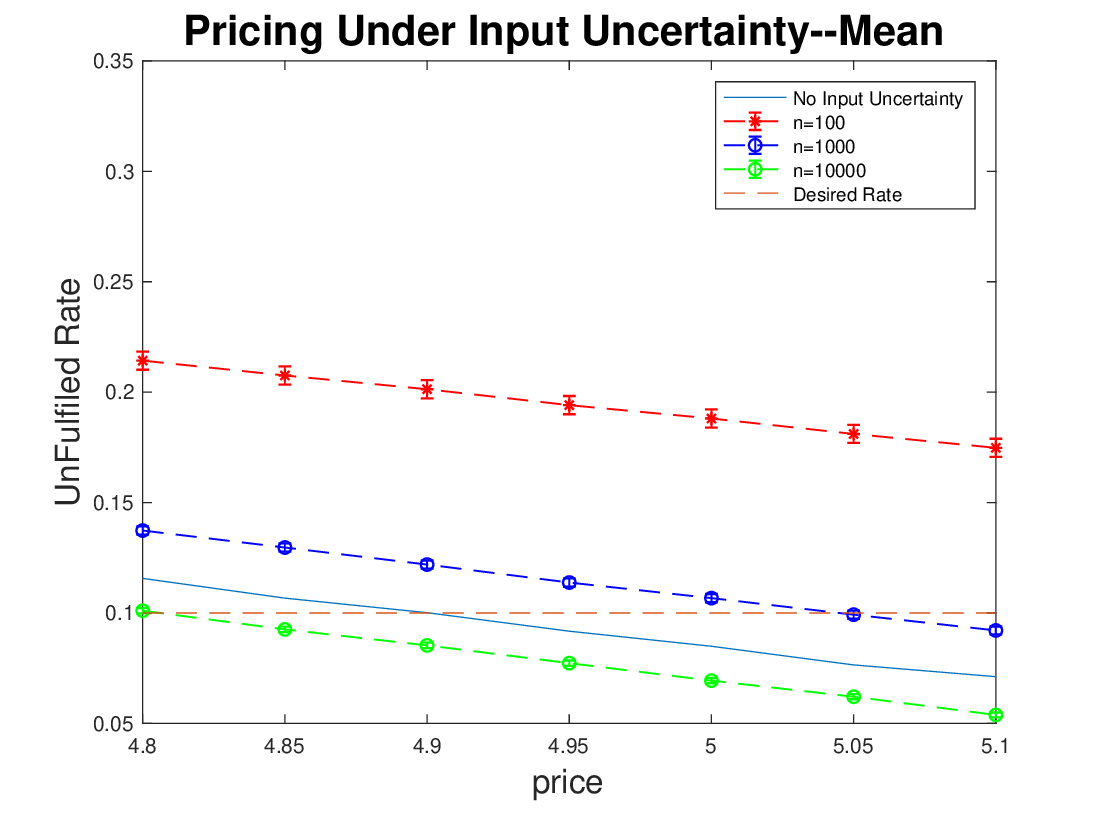}\\
\end{tabular}
\begin{tabular}{cc}
	\hspace{-5mm}
	\includegraphics[width=0.50\textwidth]{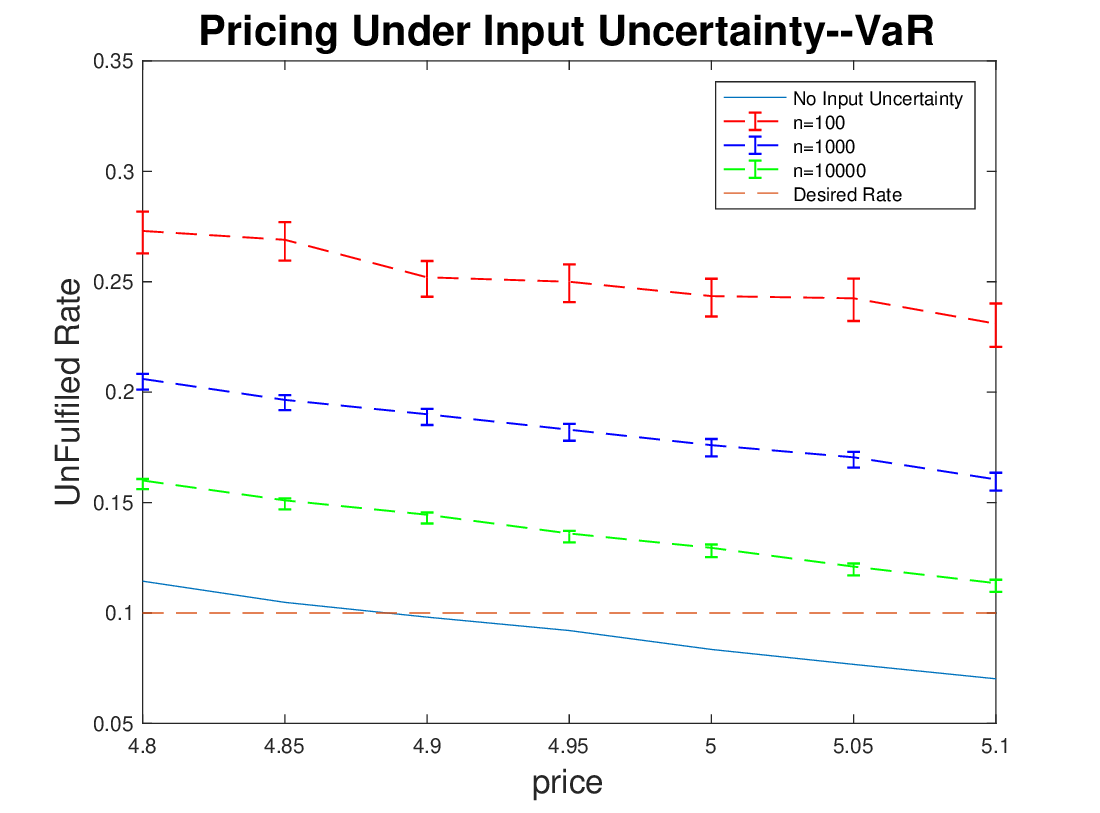} 
	\hspace{-5mm}
	\includegraphics[width=0.50\linewidth]{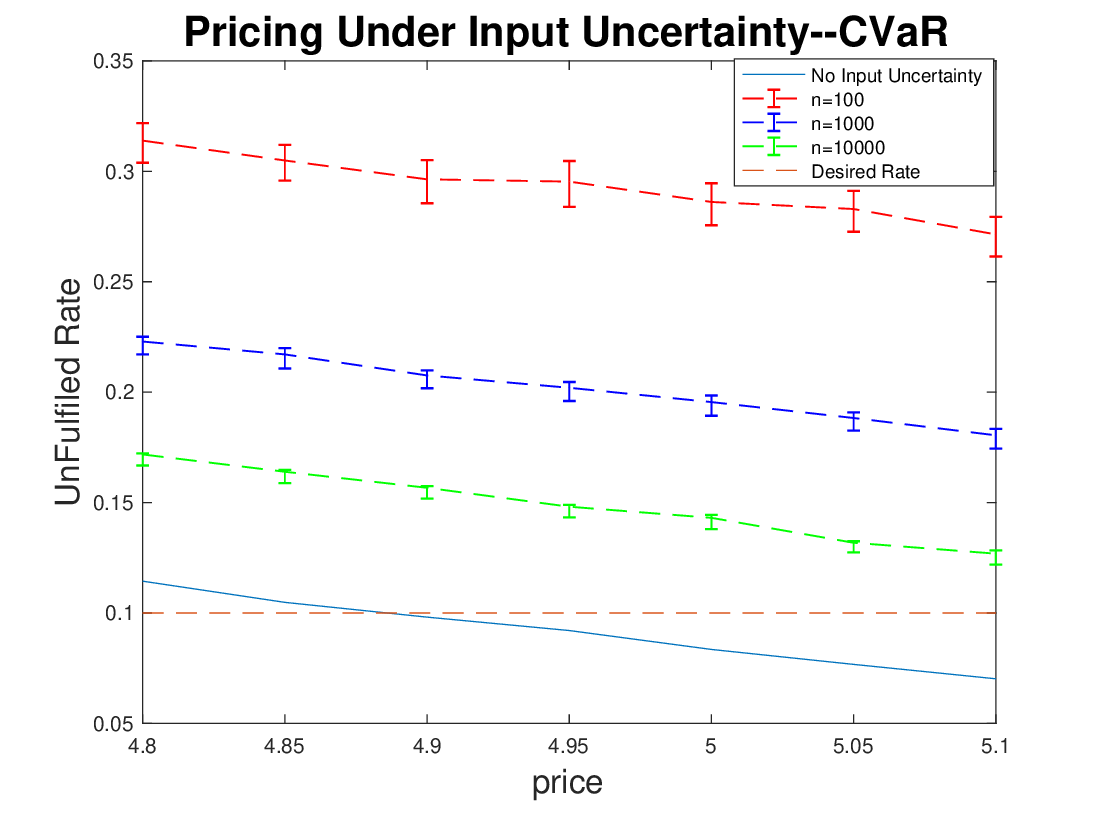}\\
\end{tabular}
\centering
\caption{ Behavior of VaR, CVaR, Mean around Optimal Price}\label{fig.1}
{\footnotesize \indent The experiment parameters are: $\Lambda_o=5$, $M_o=2$,$\alpha=0.95$, $N=2500$,$M=2000$.}
\end{figure}

We have the following observations:
\begin{itemize}
\item[(1)]In all three plots, as $n$ increases, mean, VaR, and CVaR approach the real unfulfilled rate (solid blue line).

\item[(2)] Obviously, all the plots of VaR and CVaR are above the true unfulfilled rate, indicating the risk caused by input uncertainty exists and cannot be ignored. If input uncertainty is not considered carefully, the price we find will be much different from the real optimal price. In this case,  the organizer will lose their profits because of the low price or the lack of orders.

\item[(3)] Under the same price, input uncertainty will greatly affect the CIs' widths for VaR and CVaR. In particular, when we use more observations to estimate $\Lambda_o$ and $M_o$, the CIs will be narrower. Together with the first observation, we can minimize the influence of the input uncertainty by collecting more input data.

\item[(4)] It is not clear from this figure how does price affect CIs' width due to the small price range of 4.8 to 5.1. But from previous results, we can know that the length (half-width for the mean, wider-half CI width for VaR and CVaR) increases as the price increases in certain cases. 
\end{itemize}

We finally study the associated budget allocation problem. Note that for VaR estimation and CVaR estimation, the budget allocation problem might yield different optimal allocation schemes.
Let $C(N, M)=NM+N$ and $CB=5\times 10^6$. We use $N_{pilot}=100$ outer scenarios and $M_{pilot}=50$ inner samples for each scenario in the pilot experiment to guide the budget allocation in the actual experiment.
In total, only $0.1\%$ percent of total budget is consumed, so the budget for the actual experiment is minimally affected.
To show the effectiveness of the pilot experiment, we plot the wider-half CI widths for different choices of $N$ in Figure \ref{fig.2}, where the blue curves are the wider-half CI widths calculated using terms estimated from the pilot experiment, and the red curves are the wider-half CI widths calculated using the true values obtained by brute-force simulation (i.e., using extremely large sample sizes).

\begin{figure}[htb]
\centering
\begin{tabular}{cc}
     \hspace{-7mm}
    \includegraphics[width=0.55\textwidth]{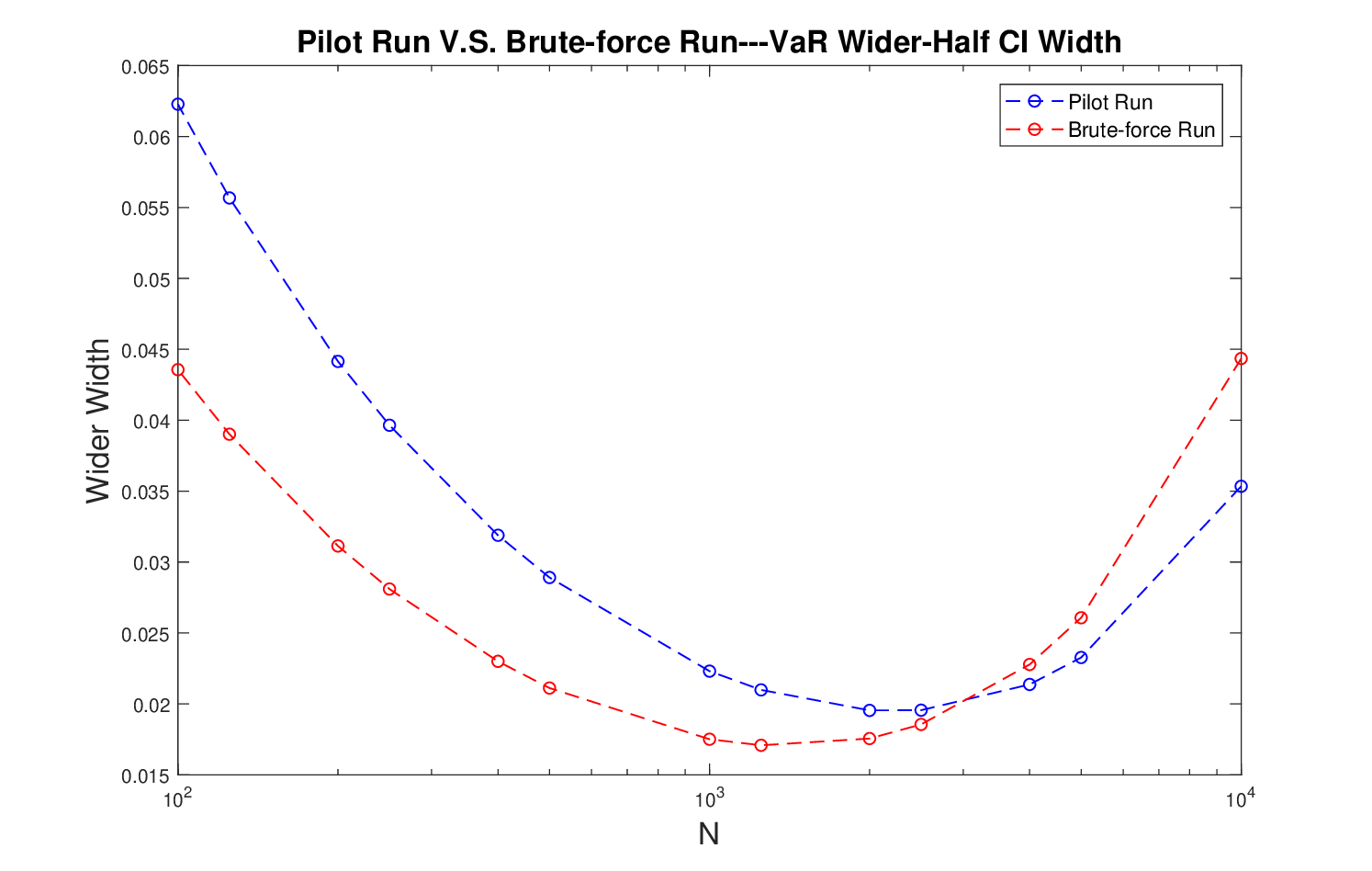} &
    \hspace{-12mm}
    \includegraphics[width=0.55\linewidth]{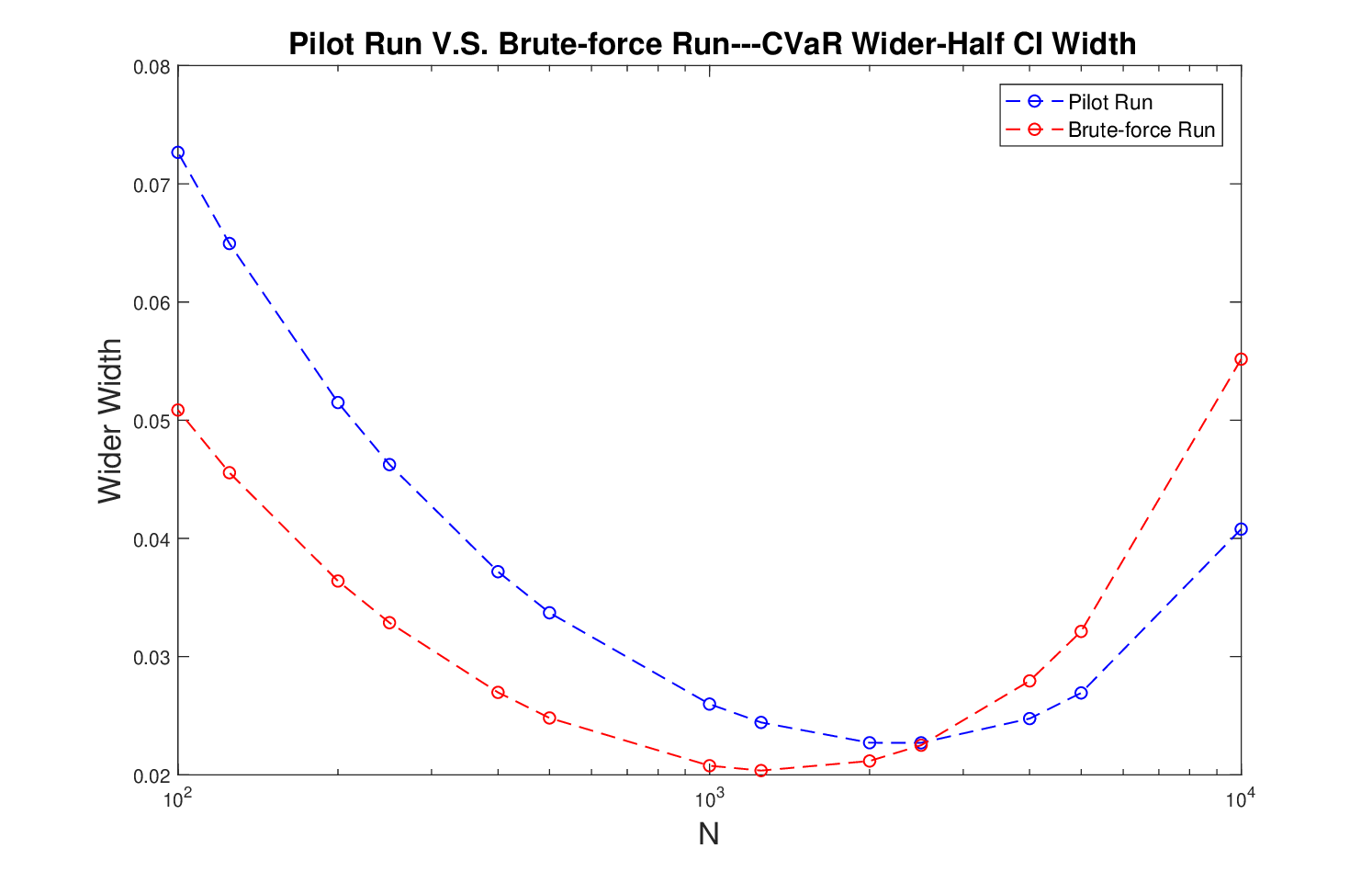}\\
\end{tabular}
\centering
\caption{ VaR and CVaR wider-half CI width: Pilot Run V.S. Brute-Force Run}\label{fig.2}
{\footnotesize \indent The experiment parameters are: $\Lambda_o=5$, $M_o=2$, $p=4$, $\alpha=0.95$, and the size of input data $n=100$.}
\end{figure}

We have the following observations:

\begin{itemize}
\item[(1)] In both plots, although there is a non-negligible gap between the wider width (blue curve) computed using the  terms estimated from the pilot experiment and the true wider width (red curve), the curves follow the same trend and their minima almost coincide. 
This implies that solving the formulated budget allocation problem could identify the optimal budget allocation scheme.
In light of the fact that only $0.1\%$ of the total simulation budget is used, we could claim that our budget allocation problem and its solution strategy provide effective guidance in determining good budget allocation schemes.

\item[(2)] By comparing the difference between the maximum and minimum of the red curves, we can see that using an optimal budget allocation scheme could narrow a CI by at least 2 times.
When the total simulation budget is limited, solving the budget allocation problem is very beneficial.

\item[(3)] The best budget allocation schemes for VaR and CVaR estimation are quite similar. In particular, the optimal $N$ for constructing CI of VaR  and CVaR are around $2\times 10^3$. 

\item[(4)] It is worth mentioning  that the wider width for both VaR and CVaR estimation appears to be first decreasing in $N$, since the input uncertainty dominates the simulation error when $N$ is not large enough. When $N$ has approached certain level, stochastic uncertainty starts to play a more important role.
\end{itemize}

In conclusion, the simulation results for the sharing economy model provide empirical evidences for the importance and necessity of risk quantification in stochastic simulation under input uncertainty, as well as the advantages of solving the associated budget allocation problem for efficient nested simulation.
\color{black}
\section{Conclusion}\label{Sec6:Consusion}

In the present paper, we introduce risk quantification in stochastic simulation under input certainty, which rigorously quantifies extreme scenarios of mean response in all possible input models.
In particular, we propose nested Monte Carlo simulation to estimate VaR or CVaR of mean response w.r.t. input uncertainty.
We prove the asymptotical properties (consistency and normality) of the resultant nested risk estimators in different limiting senses under different sets of regularity conditions.
We further use the established properties to construct (asymptotically valid) CIs, and propose a practical framework of optimal budget allocation for improving the efficiency of nested risk simulation. 
\red{Lastly, we study a sharing economy example to illustrate the importance of accessing and controlling risk due to input uncertainty, and to demonstrate the effectiveness of our budget allocation scheme.}
The work in this paper can be viewed as a starting point of research on more general risk measures for risk quantification under input uncertainty.

On the other hand, the naive nested risk estimators considered here could be restrictive in risk quantification under input uncertainty for large-scale systems, due to the inefficiency of naive rare-event simulation.
The budget allocation problem solved in this paper partially addresses this issue in the sense that it leads to good outer versus inner sample size trade-off in reducing CI width.
Developing more sophisticated budget allocation schemes will be a promising direction of future research.

\appendix

\section*{APPENDIX}

\section{Proof of Theorem \ref{thm.3.1}}\label{ap.a}
For simplicity, let us use $\widehat{v}^{N}_\alpha$, $\widehat{c}^{N}_\alpha$, $\widetilde{v}^{N,M}_\alpha$, and $\widetilde{c}^{N,M}_\alpha$ to denote  $\widehat{v}_{\alpha}$, $\widehat{c}_{\alpha}$, $\widetilde{v}_{\alpha}$, and $\widetilde{c}_{\alpha}$, respectively. Therefore, we need to show that
 \begin{eqnarray*}
\lim\limits_{N\rightarrow \infty}\lim\limits_{M\rightarrow \infty}
\widetilde{v}^{N,M}_\alpha
= v_{\alpha},\quad w.p.1, \quad \mbox{and} \quad
\lim\limits_{N\rightarrow \infty}\lim\limits_{M\rightarrow \infty}
\widetilde{c}^{N,M}_\alpha
= c_{\alpha},\quad w.p.1.
\end{eqnarray*}
In view of the error decomposition
\begin{equation*}
\widetilde{v}^{N,M}_\alpha -v_{\alpha}=
\left(\widetilde{v}^{N,M}_\alpha-\widehat{v}^{N}_\alpha\right)+
\left(\widehat{v}^{N}_\alpha -v_{\alpha}\right)~
\mbox{and}~
\widetilde{c}^{N,M}_\alpha -c_{\alpha}=
\left(\widetilde{c}^{N,M}_\alpha-\widehat{c}^{N}_\alpha\right)+
\left(\widehat{c}^{N}_\alpha -c_{\alpha}\right),
\end{equation*}
it is sufficient to show that
\begin{equation}\label{eq.a.3}
\lim\limits_{N\rightarrow \infty} \left(\widehat{v}^{N}_\alpha-v_{\alpha}\right) =0, \quad w.p.1.\quad\mbox{and}\quad
\lim\limits_{N\rightarrow \infty}
\left(\widehat{c}^{N}_\alpha- c_{\alpha}\right) =0, \quad w.p.1.
\end{equation}
and for fixed $N$ and $\theta_1,...,\theta_N$,
\begin{equation}\label{eq.a.4}
\lim\limits_{M\rightarrow \infty} \left(\widetilde{v}^{N,M}_\alpha-\widehat{v}^{N}_\alpha\right) =0, \quad w.p.1.\quad\mbox{and}\quad
\lim\limits_{M\rightarrow \infty} \left(\widetilde{c}^{N,M}_\alpha-\widehat{c}^{N}_\alpha\right) =0, \quad w.p.1.
\end{equation}
To establish \eqref{eq.a.3}, we need the following lemma, and its proof can be found in online appendix.

\begin{lemma}\label{lem.a.1}
Under Assumption \ref{asm.3.1}.(ii),
\begin{eqnarray}
&&\left(\widehat{v}^{N}_\alpha-v_{\alpha}\right)=\frac{1}{f(v_\alpha)}
\left(\alpha-\frac{1}{N}\sum\limits_{i=1}^{N}
\mathds{1}\{H(\theta_{i})\le v_\alpha\}\right)+A_{N}, \label{eq.a.5}\\
&&\left(\widehat{c}^{N}_\alpha- c_{\alpha}\right)=\left(\frac{1}{N}\sum\limits_{i=1}^{N}
\left[v_{\alpha}+\frac{1}{1-\alpha}
\left(H(\theta_{i})-v_{\alpha}\right)^{+}\right]
-c_{\alpha}\right)+B_{N}, \label{eq.a.6}
\end{eqnarray}
where $A_{N}=O_{a.s.}(N^{-3/4}(\log N)^{3/4})$,  $B_{N}=O_{a.s.}(N^{-1}\log N)$. Here note that the statement $g(N)= O_{a.s.}(h(N))$ means that $g(N)\le C\cdot h(N)$ almost surely for some constant $C$.
\end{lemma}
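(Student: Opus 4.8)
The two displays are Bahadur-type linear representations for the empirical VaR and CVaR of the i.i.d.\ sample $\{H(\theta_i)\}_{i=1}^N$, whose common c.d.f.\ is $F$ and whose density $f$ is, by Assumption \ref{asm.3.1}.(ii), positive and continuously differentiable near $v_\alpha$ (these refine the consistency/normality results of \cite{sun2010asymptotic}). Writing $F_N(t)=\frac1N\sum_{i=1}^N\mathds{1}\{H(\theta_i)\le t\}$ for the empirical c.d.f., I would reduce the whole lemma to two probabilistic inputs: (a) the almost-sure quantile deviation rate $\abs{\widehat v^N_\alpha-v_\alpha}=O_{a.s.}(N^{-1/2}(\log N)^{1/2})$, and (b) an oscillation bound for the empirical process over a shrinking neighborhood of $v_\alpha$. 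Ingredient (a) follows from the Dvoretzky--Kiefer--Wolfowitz inequality: with $\epsilon_N=cN^{-1/2}(\log N)^{1/2}$, the event $\{\widehat v^N_\alpha>v_\alpha+\epsilon_N\}$ forces $\abs{F_N-F}(v_\alpha+\epsilon_N)$ to exceed $F(v_\alpha+\epsilon_N)-\alpha\approx f(v_\alpha)\epsilon_N$, whose probability is summable in $N$ once $c$ is large enough, so Borel--Cantelli closes it (the lower deviation is symmetric).

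For \eqref{eq.a.5} I would start from the defining near-identity $F_N(\widehat v^N_\alpha)=\alpha+O(1/N)$ and expand $F$ to second order around $v_\alpha$, using $F(v_\alpha)=\alpha$ and the continuous differentiability of $f$, so that $F(\widehat v^N_\alpha)=\alpha+f(v_\alpha)(\widehat v^N_\alpha-v_\alpha)+O((\widehat v^N_\alpha-v_\alpha)^2)$. Writing $F_N=F+(F_N-F)$ and replacing $(F_N-F)(\widehat v^N_\alpha)$ by $(F_N-F)(v_\alpha)=F_N(v_\alpha)-\alpha$ yields, after solving for $\widehat v^N_\alpha-v_\alpha$, exactly the stated leading term $\frac{1}{f(v_\alpha)}(\alpha-F_N(v_\alpha))$ plus a remainder $A_N$ collecting the quadratic term and the replacement error. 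The quadratic term is $O_{a.s.}(N^{-1}\log N)$ by (a), so the rate is dictated by the replacement error, which is ingredient (b): over $\abs{t}\le\epsilon_N$ one needs $\sup_{\abs{t}\le\epsilon_N}\abs{(F_N-F)(v_\alpha+t)-(F_N-F)(v_\alpha)}=O_{a.s.}(N^{-3/4}(\log N)^{3/4})$. This is the classical Bahadur--Kiefer oscillation estimate; crucially, DKW alone is too crude here, since the increment $\mathds{1}\{v_\alpha<H_i\le v_\alpha+t\}$ has variance only $O(\epsilon_N)$. I would partition $[v_\alpha-\epsilon_N,v_\alpha+\epsilon_N]$ into $O(N^{1/2})$ cells, apply a Bernstein inequality to each centered Bernoulli increment (exploiting the small variance to get a deviation of order $N^{-3/4}(\log N)^{3/4}$), take a maximal bound over cells, and invoke Borel--Cantelli along a dyadic subsequence $N=2^k$ with monotonicity filling the gaps. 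This oscillation estimate is the main obstacle and the only genuinely delicate part of the argument.

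For \eqref{eq.a.6} I would exploit the Rockafellar--Uryasev structure. Define $\psi_N(t)=t+\frac{1}{(1-\alpha)N}\sum_{i=1}^N(H(\theta_i)-t)^+$, so that $\widehat c^N_\alpha=\psi_N(\widehat v^N_\alpha)$ and the bracketed leading term in \eqref{eq.a.6} is exactly $\psi_N(v_\alpha)-c_\alpha$; hence $B_N=\psi_N(\widehat v^N_\alpha)-\psi_N(v_\alpha)$. The decisive simplification is an exact cancellation: splitting $(H_i-\widehat v)^+-(H_i-v_\alpha)^+$ over the regions $\{H_i\le v_\alpha\}$, $\{v_\alpha<H_i\le\widehat v\}$, $\{H_i>\widehat v\}$ (for $\widehat v>v_\alpha$; the reverse case is symmetric) and using $\#\{H_i>\widehat v^N_\alpha\}=(1-\alpha)N$, the two first-order terms $(\widehat v-v_\alpha)$ cancel, leaving $B_N=-\frac{1}{(1-\alpha)N}\sum_{v_\alpha<H_i\le\widehat v}(H_i-v_\alpha)$. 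Bounding each summand by $\widehat v^N_\alpha-v_\alpha$ and the count $\#\{v_\alpha<H_i\le\widehat v\}=N\abs{\alpha-F_N(v_\alpha)}$, I obtain $\abs{B_N}\le(1-\alpha)^{-1}\abs{\widehat v^N_\alpha-v_\alpha}\cdot\abs{\alpha-F_N(v_\alpha)}$. Both factors are $O_{a.s.}(N^{-1/2}(\log N)^{1/2})$—the first by (a), the second by a direct Hoeffding/Borel--Cantelli bound on $F_N(v_\alpha)$—so their product is $O_{a.s.}(N^{-1}\log N)$, as claimed. This CVaR part is elementary once the quantile rate (a) is in hand, confirming that the VaR oscillation bound is the sole substantial step.
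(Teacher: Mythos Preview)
Your argument is correct, but the paper's own proof is a two-line citation: \eqref{eq.a.5} is exactly the Bahadur representation (Theorem 2.5.1 in \cite{serfling2009approximation}), and \eqref{eq.a.6} is the special case of Theorem~2 in \cite{sun2010asymptotic} with trivial importance-sampling weight. So your proposal is not the paper's approach but a self-contained reproof of those cited results. For the VaR part you are essentially reproducing Serfling's proof of Bahadur's theorem, and you correctly identify the empirical-process oscillation bound as the only substantive ingredient. For the CVaR part your route via $\psi_N(t)=t+\frac{1}{(1-\alpha)N}\sum_i(H(\theta_i)-t)^+$ and the exact first-order cancellation (using $\#\{H_i>\widehat v^N_\alpha\}=(1-\alpha)N$) is cleaner than invoking an external theorem, and it yields the product bound $|B_N|\le(1-\alpha)^{-1}|\widehat v^N_\alpha-v_\alpha|\cdot|\alpha-F_N(v_\alpha)|$ directly; this is in fact the same manipulation the paper carries out later in the proof of Lemma~\ref{lem.b.4} for the $M$-dependent version. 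One small simplification: since $\alpha N$ is assumed integer and $H(\theta)$ has a density, $F_N(\widehat v^N_\alpha)=\alpha$ exactly, so the $O(1/N)$ slack you wrote is unnecessary. What your approach buys is self-containment and a transparent mechanism for the CVaR rate; what the paper's citation buys is brevity and the ability to reuse the same references (Serfling's Lemmas 2.5.4B and 2.5.4E) uniformly in $M$ for Lemma~\ref{lem.b.4}.
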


Notice that $\frac{1}{N}\sum\limits_{i=1}^{N}
\mathds{1}\{H(\theta_{i})\le v_\alpha\}$ is an unbiased sample estimator of $\alpha$. By Strong Law of Large Numbers,
\begin{equation*}
\lim\limits_{N\rightarrow \infty} \frac{1}{N}\sum\limits_{i=1}^{N}
\mathds{1}\{H(\theta_{i})\le v_\alpha\}-\alpha =0, \quad w.p.1.
\end{equation*}
Combining with the fact $\lim\limits_{N\rightarrow \infty} A_N =0, \; w.p.1$, $\lim\limits_{N\rightarrow \infty} \left(\widehat{v}^{N}_\alpha-v_{\alpha}\right) =0, \; w.p.1.$  To show the latter half of \eqref{eq.a.3}, notice that $\frac{1}{N}\sum\limits_{i=1}^{N}
\left[v_{\alpha}+\frac{1}{1-\alpha}
\left(H(\theta_{i})-v_{\alpha}\right)^{+}\right]$ is an unbiased sample estimator of $c_\alpha$. Furthermore, by Assumption \ref{asm.3.1}.(i),
$$
\mathbb{E}[H^2(\theta)]=\mathbb{E}[\mathbb{E}^2[h(\theta;\xi)|\theta]]=
\int \mathbb{E}^2[h(\theta;\xi)|\theta] f(\theta)d\theta\le
\int \mathbb{E}[h^2(\theta;\xi)|\theta] f(\theta)d\theta<\infty.
$$
Therefore, $Var(H(\theta))$ is finite and $Var(v_{\alpha}+\frac{1}{1-\alpha}
\left(H(\theta)-v_{\alpha}\right)^{+})$ is also finite. By Strong Law of Large Numbers,
\begin{equation*}
\lim\limits_{N\rightarrow \infty} \frac{1}{N}\sum\limits_{i=1}^{N}
\left[v_{\alpha}+\frac{1}{1-\alpha}
\left(H(\theta_{i})-v_{\alpha}\right)^{+}\right]-c_\alpha =0, \quad w.p.1.
\end{equation*}
Combining with the fact $\lim\limits_{N\rightarrow \infty} B_N =0, \; w.p.1$, $\lim\limits_{N\rightarrow \infty} \left(\widehat{c}^{N}_\alpha-c_{\alpha}\right) =0, \; w.p.1$. \eqref{eq.a.3} has been established.

It remains to establish \eqref{eq.a.4} for fixed $N$ and scenarios $\theta_1,...,\theta_N$.
That is, we need to show for fixed $N$ and scenarios $\theta_1,...,\theta_N$,
\begin{eqnarray}
&& \lim\limits_{M\rightarrow \infty}
\widehat{H}_M(\theta^{(\alpha N)})-H(\theta_{(\alpha N)})=0, \quad w.p.1, \label{eq.a.9}\\
&& \lim\limits_{M\rightarrow \infty}
\left(\frac{1}{(1-\alpha)N}\sum\limits_{i= \alpha N}^{N}\widehat{H}_M(\theta^{(i)})-\frac{1}{(1-\alpha)N}
\sum\limits_{i= \alpha N}^{N}H(\theta_{(i)})\right)=0, \quad w.p.1. \label{eq.a.10}
\end{eqnarray}
Recall that for any $\theta_i, i=1,...,N$, $\mathbb{E}[h(\theta_i;\xi)|\theta_i]=H(\theta_{i})$ and  $Var[h(\theta_i;\xi)|\theta_i]=\tau^2_i<\infty$, where we use $\tau^2_i$ to denote $\tau^2_{\theta_{i}}$ with slight abuse of notations. By Strong Law of Large Numbers, we have for $i=1,...N$, $\widehat{H}_M(\theta_{i})\overset{M\rightarrow \infty}\rightarrow H(\theta_{i}), \; w.p.1$. Let $\Omega_i\subseteq \Omega$ be the set of such convergent scenarios for $i=1,...,N$, where $\Omega$ is the underlying sample space. Thus $P(\Omega_i)=1$. Denote $\bar{\Omega}:=\bigcap_{i=1}^{N} \Omega_i$, the intersection of all convergent scenario sets. Clearly, by Boole's Inequality $P(\bar{\Omega})=1$. Let us also denote, for any scenario $w\in\bar{\Omega}$, $\widehat{H}_M^w(\theta)$ as the sample realization of $\widehat{H}_M(\theta)$, $i=1,..,N$. Therefore, $\forall w\in\bar{\Omega}$
\begin{equation}\label{eq.a.11}
\lim\limits_{M\rightarrow \infty}(\widehat{H}^w_M(\theta_{1}),...,\widehat{H}^w_M(\theta_{N}))
=(H(\theta_{1}),...,H(\theta_{N})).
\end{equation}
\color{black}
Let $\epsilon:=\frac{1}{3}\min\{H(\theta_{i})-H(\theta_{j}):\;H(\theta_{i})\neq H(\theta_{j})\;
\; i\neq j,\; i,j=1,...,N\}$. By definition, \eqref{eq.a.11} implies that there exists a sufficient large $M_\epsilon$ such that $\forall M\ge M_\epsilon$, $|\widehat{H}^w_M(\theta_{i})-H(\theta_{i})|<\epsilon, i=1,...,N.$ It follows that, $\forall M\ge M_\epsilon$, and $i,j$ such that $H(\theta_{i})\neq H(\theta_{j})$,
$$\widehat{H}^w_M(\theta_{(i)})<\widehat{H}^w_M(\theta_{(j)})$$
If there exists $i<j$ such that $H(\theta_{i})= H(\theta_{j})$,  we simply write $\widehat{H}^w_M(\theta_{i})<\widehat{H}^w_M(\theta_{j})$ whatever their real order is.  We can make this assumption here because we only care about the order sequence of different $H$. No matter which one of these two estimates is larger, since both of them converge to the same $H$, the order sequence of $H$ does not change. \color{black} Then we have
$$
\widehat{H}^w_M(\theta_{(1)})<\widehat{H}^w_M(\theta_{(2)})<
\cdot\cdot\cdot<\widehat{H}^w_M(\theta_{(N)}).
$$
That is, $\forall M\ge M_\epsilon$, the sampling error so small that  the order sequence of the mean response is not perturbed. 

 Thus, $\forall M\ge M_\epsilon$, $(\theta_{w}^{(1)},..,\theta_{w}^{(N)})=(\theta_{(1)},...,\theta_{(N)})$,
where $\theta_{w}^{(i)}$ is the sample realization of $\theta^{(i)}$ with scenario $w$. Therefore, for any scenario $w\in\bar{\Omega}$,
$$
\lim\limits_{M\rightarrow \infty}
\widehat{H}^w_M(\theta_w^{(\alpha N)})=\lim\limits_{M\rightarrow \infty}
\widehat{H}^w_M(\theta_{(\alpha N)})=H(\theta_{(\alpha N)}),
$$
and
$$
\lim\limits_{M\rightarrow \infty}
\frac{1}{(1-\alpha)N}\sum\limits_{i= \alpha N}^{N}\widehat{H}^w_M(\theta_w^{(i)})= \lim\limits_{M\rightarrow \infty}
\frac{1}{(1-\alpha)N}\sum\limits_{i= \alpha N}^{N}\widehat{H}^w_M(\theta_{(i)}) =\frac{1}{(1-\alpha)N}
\sum\limits_{i= \alpha N}^{N}H(\theta_{(i)}).
$$
Notice $P(\bar{\Omega})=1$, \eqref{eq.a.9} and \eqref{eq.a.10} naturally hold.

\section{Proof of Theorem \ref{thm.3.2}}\label{ap.b}

Recall we need to show that
 \begin{eqnarray*}
\lim\limits_{N,M\rightarrow \infty}
\widetilde{v}^{N,M}_\alpha
= v_{\alpha},\quad w.p.1, \quad \mbox{and} \quad
\lim\limits_{N,M\rightarrow \infty}
\widetilde{c}^{N,M}_\alpha
= c_{\alpha},\quad w.p.1.
\end{eqnarray*}
In addition to the notations previously introduced in Appendix \ref{ap.a}, let us further use $\breve{v}_\alpha^{M}$ and $\breve{c}_\alpha^M$ to denote $v_\alpha(\widehat{H}_M(\theta))$ and $c_\alpha(\widehat{H}_M(\theta))$, respectively. That is, $\breve{v}_\alpha^{M}$ and $\breve{c}_\alpha^M$ are the exact $\alpha$-level VaR and CVaR of  noised mean response $\widehat{H}_M(\theta)$, respectively. As mentioned after Theorem \ref{thm.3.2}, in view of the fact that $\widetilde{v}_\alpha(H(\theta))=
\widehat{v}_\alpha(\widehat{H}_M(\theta))$ and $\widetilde{c}_\alpha(H(\theta))=
\widehat{c}_\alpha(\widehat{H}_M(\theta))$,
$\widetilde{v}^{N,M}_\alpha$ and $\widetilde{c}^{N,M}_\alpha$ could be regarded as the one-layer Monte Carlo estimator of $\breve{v}_\alpha^{M}$ and $\breve{c}_\alpha^M$, respectively. This observation inspires us to consider the following error decomposition
\begin{equation}\label{eq.b.2}
\widetilde{v}^{N,M}_\alpha -v_{\alpha}=
\left(\widetilde{v}^{N,M}_\alpha-\breve{v}^{M}_\alpha\right)+
\left(\breve{v}^{M}_\alpha -v_{\alpha}\right)
~
\mbox{and}~~
\widetilde{c}^{N,M}_\alpha -c_{\alpha}=
\left(\widetilde{c}^{N,M}_\alpha-\breve{c}^{M}_\alpha\right)+
\left(\breve{c}^{M}_\alpha -c_{\alpha}\right).
\end{equation}
Therefore, it is sufficient to show that
\begin{eqnarray}\label{eq.b.3}
\lim\limits_{M\rightarrow \infty} \breve{v}^{M}_\alpha =v_{\alpha} \quad\mbox{and}\quad
\lim\limits_{M\rightarrow \infty} \breve{c}^{M}_\alpha =c_{\alpha},
\end{eqnarray}
and uniformly for all $M$,
\begin{eqnarray}\label{eq.b.4}
\lim\limits_{N\rightarrow \infty} \widetilde{v}^{N,M}_\alpha = \breve{v}^{M}_\alpha \quad w.p.1 \quad\mbox{and}\quad
\lim\limits_{N\rightarrow \infty} \widetilde{c}^{N,M}_\alpha = \breve{c}^{M}_\alpha \quad w.p.1.
\end{eqnarray}
Let us first establish \eqref{eq.b.3}. The following lemmas will be useful, and we refer to online appendix for the proofs.
\begin{lemma}\label{lem.b.1}
Under Assumption \ref{asm.3.2}, if a sequence $t_M\rightarrow t$ as $M\rightarrow \infty$, then $\widetilde{f}_M(t_M)\rightarrow f(t)$ and $\widetilde{f}^\prime_M(t_M)\rightarrow f^\prime(t)$ as $M\rightarrow \infty$, where recall $\widetilde{f}_M(\cdot)$ is the p.d.f. of  noised mean response $\widehat{H}_M(\theta)$.
\end{lemma}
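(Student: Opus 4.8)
The plan is to represent $\widetilde f_M$ as an integral against the joint density $p_M$ and then Taylor-expand the integrand in its first (mean-response) argument, controlling every resulting term uniformly in $M$ by means of the dominating envelopes $g_{i,M}$ and the uniform moment bounds in Assumption \ref{asm.3.2}.(iii). First I would write down the density of $\widehat H_M(\theta)=H(\theta)+\bar{\mathcal E}_M/\sqrt M$: since $(H(\theta),\bar{\mathcal E}_M)$ has joint density $p_M(h,e)$, the shear change of variables $(h,e)\mapsto(h+e/\sqrt M,\,e)$ has unit Jacobian, so
\begin{equation*}
\widetilde f_M(y)=\int p_M\!\left(y-\tfrac{e}{\sqrt M},\,e\right)de .
\end{equation*}
The two structural facts I would record next are that integrating out $e$ recovers the ($M$-independent) marginal of $H(\theta)$, i.e.\ $\int p_M(h,e)\,de=f(h)$ for every $h$, and that $\mathbb E[\mathcal E(\theta;\xi)\mid\theta]=0$ forces $\mathbb E[\bar{\mathcal E}_M\mid H(\theta)=h]=0$, whence the conditional first moment vanishes, $\int e\,p_M(h,e)\,de=0$ for every $h$.

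Next I would Taylor-expand the integrand about $t_M$,
\begin{equation*}
p_M\!\left(t_M-\tfrac{e}{\sqrt M},e\right)=p_M(t_M,e)-\frac{e}{\sqrt M}\,\frac{\partial}{\partial h}p_M(t_M,e)+\frac{e^2}{2M}\,\frac{\partial^2}{\partial h^2}p_M(\zeta_{M,e},e),
\end{equation*}
with $\zeta_{M,e}$ between $t_M-e/\sqrt M$ and $t_M$. Integrating term by term, the leading term equals $f(t_M)$; differentiating the identity $\int e\,p_M(h,e)\,de\equiv0$ in $h$ shows the first-order term $\int e\,\partial_h p_M(t_M,e)\,de$ vanishes; and the remainder is bounded in absolute value by $\frac{1}{2M}\int e^2 g_{2,M}(e)\,de\le\frac{1}{2M}\sup_M\int|e|^2 g_{2,M}(e)\,de=O(1/M)$, where Assumption \ref{asm.3.2}.(iii) with $r=2$ is invoked. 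Hence $\widetilde f_M(t_M)=f(t_M)+O(1/M)$, and continuity of $f$ around $v_\alpha$ (Assumption \ref{asm.3.1}.(ii), which holds under the Strong Assumption) gives $\widetilde f_M(t_M)\to f(t)$; for the lemma itself only the convergence, not the sharp rate, is needed.

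For the derivative I would first justify differentiation under the integral sign using the envelope $\abs{\partial_h p_M}\le g_{1,M}$ together with $\sup_M\int g_{1,M}(e)\,de<\infty$, obtaining $\widetilde f_M'(y)=\int \partial_h p_M(y-e/\sqrt M,e)\,de$, and then repeat a one-term expansion: the leading piece is $\int\partial_h p_M(t_M,e)\,de=f'(t_M)$ (by differentiating $\int p_M(h,e)\,de=f(h)$), while the remainder is bounded by $\frac{1}{\sqrt M}\int\abs{e}\,g_{2,M}(e)\,de=O(1/\sqrt M)$; continuity of $f'$ then yields $\widetilde f_M'(t_M)\to f'(t)$.

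The step I expect to be the main obstacle is the rigorous justification of interchanging differentiation with integration and of bounding the integrated Taylor remainders \emph{uniformly} in $M$; this is precisely where Assumption \ref{asm.3.2}.(ii)--(iii) is indispensable, since it supplies $M$-uniform integrable envelopes $g_{0,M},g_{1,M},g_{2,M}$ with finite moments up to order $4$. A secondary point requiring care is the pair of structural identities $\int p_M(h,e)\,de=f(h)$ and $\int e\,p_M(h,e)\,de=0$: the former is just consistency of the joint density's marginal with the law of $H(\theta)$, and the latter follows from the tower property together with $\mathbb E[\mathcal E(\theta;\xi)\mid\theta]=0$.
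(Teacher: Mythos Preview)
Your proof is correct and shares the paper's core strategy: write $\widetilde f_M(y)=\int p_M(y-e/\sqrt M,e)\,de$, Taylor-expand the integrand in its first argument, and control every piece uniformly in $M$ via the envelopes $g_{i,M}$ from Assumption~\ref{asm.3.2}.(iii). The only difference is in the bookkeeping of the expansion. The paper expands $p_M(t_M-e/\sqrt M,e)-p_M(t,e)$ to first order with a mean-value remainder, obtaining the two terms $(t_M-t)\int\partial_h p_M(\check t_M,e)\,de$ and $-M^{-1/2}\int e\,\partial_h p_M(\check t_M,e)\,de$, each of which tends to zero directly from $t_M\to t$ and the uniform bounds $\sup_M\int g_{1,M}<\infty$, $\sup_M\int|e|\,g_{1,M}<\infty$; it never invokes the mean-zero identity or continuity of $f$. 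You instead expand to second order about $t_M$, identify the leading term as $f(t_M)$, use $\int e\,\partial_h p_M(t_M,e)\,de=0$ to annihilate the first-order piece exactly, bound the remainder by $O(1/M)$, and then pass from $f(t_M)$ to $f(t)$ by continuity. Your route is slightly longer but delivers the sharper statement $\widetilde f_M(t_M)=f(t_M)+O(1/M)$, which the lemma itself does not require but which is precisely the computation underlying the bias expansions in Lemmas~\ref{lem.b.2}--\ref{lem.b.3}.
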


\begin{lemma}\label{lem.b.2}
Under Assumption \ref{asm.3.2},
\begin{equation*}
\breve{v}^{M}_\alpha =v_{\alpha}+\frac{-\Lambda^\prime(v_{\alpha})}
{M f(v_{\alpha})} + o_M(\frac{1}{M}),
\end{equation*}
where the function $\Lambda(t)=1/2 f(t)\mathbb{E}[\tau_\theta^2|H(\theta)=t]$ and $o_M(\frac{1}{M})$ means this quantity goes to zero faster than $\frac{1}{M}$ (almost surely).
\end{lemma}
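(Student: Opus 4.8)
The plan is to derive an $O(1/M)$ expansion of the c.d.f. $\widetilde{F}_M$ of the noised mean response $\widehat{H}_M(\theta)$ around the true c.d.f. $F$, and then invert it at the level $\alpha$. Following the representation already used in the proof of Lemma \ref{lem.b.1}, I would write $\widehat{H}_M(\theta)=H(\theta)+\bar{\mathcal{E}}_M/\sqrt{M}$ with joint density $p_M(h,e)$, and introduce the partial c.d.f. in the first argument, $P_M(h,e):=\int_{-\infty}^{h}p_M(h',e)\,dh'$. Since $\widehat{H}_M(\theta)\le t$ iff $H(\theta)\le t-\bar{\mathcal{E}}_M/\sqrt{M}$, this gives $\widetilde{F}_M(t)=\int_{\mathbb{R}}P_M(t-e/\sqrt{M},e)\,de$. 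Crucially, the $h$-marginal of $p_M$ equals $f$ for \emph{every} $M$ (the law of $H(\theta)$ does not depend on the inner size), so $F(t)=\int_{\mathbb{R}}P_M(t,e)\,de$ and therefore
\begin{equation*}
\widetilde{F}_M(t)-F(t)=\int_{\mathbb{R}}\bigl[P_M(t-e/\sqrt{M},e)-P_M(t,e)\bigr]\,de.
\end{equation*}

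Next I would Taylor-expand the integrand to third order in the first argument, using $\partial_h P_M=p_M$, $\partial_h^2 P_M=\partial_h p_M$, and $\partial_h^3 P_M=\partial_h^2 p_M$. The $O(M^{-1/2})$ contribution is $-\tfrac{1}{\sqrt{M}}\int e\,p_M(t,e)\,de$, which vanishes: by $\mathbb{E}[\mathcal{E}(\theta;\xi)\mid\theta]=0$ we have $\mathbb{E}[\bar{\mathcal{E}}_M\mid\theta]=0$, and the tower property over the coarser field $\sigma(H(\theta))$ gives $\int e\,p_M(t,e)\,de=f(t)\,\mathbb{E}[\bar{\mathcal{E}}_M\mid H(\theta)=t]=0$. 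The $O(M^{-1})$ contribution is $\tfrac{1}{2M}\int e^2\,\partial_h p_M(t,e)\,de=\tfrac{1}{2M}\,\partial_h\!\int e^2 p_M(t,e)\,de$; since $Var[\bar{\mathcal{E}}_M\mid\theta]=\tau_\theta^2$, one has $\int e^2 p_M(t,e)\,de=f(t)\,\mathbb{E}[\tau_\theta^2\mid H(\theta)=t]=2\Lambda(t)$, so this term equals $\Lambda'(t)/M$. Hence
\begin{equation*}
\widetilde{F}_M(t)=F(t)+\frac{\Lambda'(t)}{M}+o_M\!\left(\tfrac{1}{M}\right).
\end{equation*}

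The inversion is then routine. Writing $\breve{v}^{M}_\alpha=v_\alpha+\delta_M$, where $\delta_M\to 0$ follows from the convergence $\widetilde{F}_M\to F$ implied by Lemma \ref{lem.b.1} together with the continuity and strict monotonicity of $F$ near $v_\alpha$, I would substitute into $\widetilde{F}_M(\breve{v}^{M}_\alpha)=\alpha=F(v_\alpha)$ and expand $F(v_\alpha+\delta_M)=\alpha+f(v_\alpha)\delta_M+o(\delta_M)$ and $\Lambda'(v_\alpha+\delta_M)=\Lambda'(v_\alpha)+o(1)$. Solving $f(v_\alpha)\delta_M+\Lambda'(v_\alpha)/M=o(\delta_M)+o_M(1/M)$ for $\delta_M$, and using $f(v_\alpha)>0$ from Assumption \ref{asm.3.1}.(ii) (which holds under the Strong Assumption), yields $\delta_M=-\Lambda'(v_\alpha)/(M f(v_\alpha))+o_M(1/M)$, exactly the claimed expansion.

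I expect the main obstacle to lie not in the formal expansion but in making the error genuinely $o_M(1/M)$ uniformly in $M$. The two delicate points are (i) the differentiation under the integral sign in $\int e^2\,\partial_h p_M(t,e)\,de=\partial_h\!\int e^2 p_M(t,e)\,de$, and (ii) the third-order Taylor remainder $\tfrac{1}{6M^{3/2}}\int |e|^3\,\bigl|\partial_h^2 p_M(\check{t}_M,e)\bigr|\,de$, where $\check{t}_M$ lies between $t-e/\sqrt{M}$ and $t$. Both are controlled precisely by Assumption \ref{asm.3.2}.(iii): the dominations $\abs{\partial_h p_M}\le g_{1,M}$ and $\abs{\partial_h^2 p_M}\le g_{2,M}$, combined with $\sup_M\int |e|^r g_{i,M}(e)\,de<\infty$ for $r\le 4$, bound the relevant integrals uniformly in $M$, so that the interchange is legitimate by dominated convergence and the remainder is $O(M^{-3/2})=o_M(1/M)$.
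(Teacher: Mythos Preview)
Your proposal is correct and follows essentially the same approach as the paper: both compute $\widetilde{F}_M(v_\alpha)-F(v_\alpha)$ by Taylor-expanding in the $h$-argument of the joint density $p_M$, identify the vanishing $O(M^{-1/2})$ term via $\mathbb{E}[\bar{\mathcal{E}}_M\mid H(\theta)=t]=0$, identify the $O(M^{-1})$ term as $\Lambda'(v_\alpha)/M$, bound the remainder by $O(M^{-3/2})$ using Assumption~\ref{asm.3.2}.(iii), and then invert. The only cosmetic difference is in the inversion: the paper Taylor-expands $\widetilde{F}_M$ around $v_\alpha$ using $\widetilde{f}_M$ and appeals to Lemma~\ref{lem.b.1}, whereas you substitute your expansion $\widetilde{F}_M=F+\Lambda'/M+o_M(1/M)$ and expand $F$ using $f$; both routes are equivalent since the domination in Assumption~\ref{asm.3.2}.(iii) makes the $o_M(1/M)$ remainder uniform in $t$.
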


\begin{lemma}\label{lem.b.3}
Under Assumption \ref{asm.3.2},
\begin{equation}\label{eq.b.13}
\breve{c}^{M}_\alpha =c_{\alpha}+\frac{\Lambda(v_{\alpha})}
{(1-\alpha)M } + o_M(\frac{1}{M}).
\end{equation}
\end{lemma}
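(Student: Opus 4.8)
The plan is to mirror the proof of Lemma~\ref{lem.b.2} while exploiting the optimization representation of CVaR, which produces the advertised constant $\Lambda(v_\alpha)$ through an automatic cancellation. For any random variable $X$ with continuous distribution one has $c_\alpha(X)=\min_x\{x+\frac{1}{1-\alpha}\mathbb{E}[(X-x)^+]\}$, with minimizer $x=v_\alpha(X)$. Applying this to $X=\widehat{H}_M(\theta)$ and writing $u_M(x):=x+\frac{1}{1-\alpha}\mathbb{E}[(\widehat{H}_M(\theta)-x)^+]$, I would first observe that $\breve{c}^M_\alpha=u_M(\breve{v}^M_\alpha)$ with $u_M'(\breve{v}^M_\alpha)=0$ and $u_M''(x)=\widetilde{f}_M(x)/(1-\alpha)$ uniformly bounded (by Lemma~\ref{lem.b.1} and Assumption~\ref{asm.3.2}). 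A second-order Taylor expansion of $u_M$ about $\breve{v}^M_\alpha$ then gives
\[
\breve{c}^M_\alpha=u_M(v_\alpha)+O\!\left((\breve{v}^M_\alpha-v_\alpha)^2\right)=v_\alpha+\frac{1}{1-\alpha}\,\mathbb{E}\!\left[(\widehat{H}_M(\theta)-v_\alpha)^+\right]+o_M\!\left(\tfrac{1}{M}\right),
\]
since $\breve{v}^M_\alpha-v_\alpha=O(1/M)$ by Lemma~\ref{lem.b.2} makes the squared error $o_M(1/M)$. This is the key simplification: replacing the noised VaR $\breve{v}^M_\alpha$ by $v_\alpha$ inside the CVaR functional is free at order $1/M$, so the first-order VaR correction never enters $\breve{c}^M_\alpha$, which is precisely why no $\Lambda'(v_\alpha)$ term survives.

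The second step quantifies the remaining tail expectation. Using $(1-\alpha)(c_\alpha-v_\alpha)=\mathbb{E}[(H(\theta)-v_\alpha)^+]$, it suffices to expand $\int_{v_\alpha}^\infty (t-v_\alpha)\bigl(\widetilde{f}_M(t)-f(t)\bigr)\,dt$. As in Lemma~\ref{lem.b.2}, I would write $\widetilde{f}_M(t)=\int_{\mathbb{R}}p_M(t-e/\sqrt{M},e)\,de$ and Taylor-expand in the first argument. The $O(1/\sqrt{M})$ term vanishes because $\int e\,p_M(t,e)\,de=f(t)\,\mathbb{E}[\bar{\mathcal{E}}_M\mid H(\theta)=t]=0$, while the $O(1/M)$ term yields $\frac{1}{2M}\frac{\partial^2}{\partial t^2}\bigl(f(t)\mathbb{E}[\tau_\theta^2\mid H(\theta)=t]\bigr)=\Lambda''(t)/M$, so that $\widetilde{f}_M(t)-f(t)=\Lambda''(t)/M+o_M(1/M)$. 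Substituting and integrating by parts once,
\[
\int_{v_\alpha}^\infty (t-v_\alpha)\Lambda''(t)\,dt=\Big[(t-v_\alpha)\Lambda'(t)\Big]_{v_\alpha}^\infty-\int_{v_\alpha}^\infty \Lambda'(t)\,dt=\Lambda(v_\alpha),
\]
where the lower boundary term vanishes through the factor $t-v_\alpha$, the upper boundary term and $\int_{v_\alpha}^\infty\Lambda'(t)\,dt=-\Lambda(v_\alpha)$ follow from the decay of $\Lambda,\Lambda'$ at $+\infty$. Combining both steps gives $\breve{c}^M_\alpha=c_\alpha+\Lambda(v_\alpha)/\bigl((1-\alpha)M\bigr)+o_M(1/M)$, which is \eqref{eq.b.13}.

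I expect the main obstacle to be the rigorous control of the remainder over the \emph{unbounded} tail $[v_\alpha,\infty)$, rather than the (routine) Taylor algebra. Specifically, I must justify differentiating under the integral sign, discard the third-order Taylor remainder after weighting it by the extra factor $(t-v_\alpha)$ and integrating over $t\in[v_\alpha,\infty)$ and $e\in\mathbb{R}$, and guarantee that the integration-by-parts boundary terms vanish. All of these are supplied by Assumption~\ref{asm.3.2}.(iii): the dominations $p_M\le g_{0,M}$, $|\partial_h p_M|\le g_{1,M}$, $|\partial_h^2 p_M|\le g_{2,M}$ with $\sup_M\int|e|^r g_{i,M}(e)\,de<\infty$ for $0\le r\le 4$ control the higher $e$-moments generated by the expansion together with the polynomial weight $t-v_\alpha$, and finiteness of moments forces the decay of $\Lambda$ and $\Lambda'$. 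This is the same machinery already used in Lemma~\ref{lem.b.2} and Proposition~1 of \cite{gordy2010nested}, so most estimates can be imported; the only genuinely new bookkeeping is the extra factor $(t-v_\alpha)$ in the integrand, which is exactly why Assumption~\ref{asm.3.2}.(iii) requires moments one order higher than would be needed for the VaR expansion.
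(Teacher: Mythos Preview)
Your first step---using the optimization characterization $\breve{c}^M_\alpha=\min_x u_M(x)$ and the first-order condition $u_M'(\breve{v}^M_\alpha)=0$ to replace $\breve{v}^M_\alpha$ by $v_\alpha$ at cost $O\bigl((\breve{v}^M_\alpha-v_\alpha)^2\bigr)=o_M(1/M)$---is correct and is a genuine streamlining of the paper's argument. The paper instead works from the tail-expectation form $\breve{c}^M_\alpha=(1-\alpha)^{-1}\int_{\breve{v}^M_\alpha}^\infty t\,\widetilde{f}_M(t)\,dt$, splits the integral at $v_\alpha$ via the Mean Value Theorem, and must carry a $v_\alpha\Lambda'(v_\alpha)/((1-\alpha)M)$ term (coming from Lemma~\ref{lem.b.2}) through the calculation, which only cancels against matching pieces of the two tail integrals \eqref{eq.b.18}--\eqref{eq.b.19}. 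Your variational device absorbs that cancellation automatically, which is precisely the conceptual gain you advertise.

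The second step, however, has a technical gap as written. The dominations in Assumption~\ref{asm.3.2}.(iii) are \emph{uniform in $h$}: they bound $\Lambda,\Lambda',\Lambda''$ and the Taylor remainder by constants independent of $t$, but provide no decay as $t\to\infty$. Hence neither the integral $\int_{v_\alpha}^\infty (t-v_\alpha)\Lambda''(t)\,dt$ nor the weighted remainder integral is a priori finite under Assumption~\ref{asm.3.2} alone, and the boundary term $(t-v_\alpha)\Lambda'(t)\to 0$ is not guaranteed (boundedness of $\Lambda'$ allows linear growth of the product; integrability of $\Lambda$ forces $\Lambda(t)\to 0$ but not the stronger statement you need). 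The paper avoids this entirely: after a change of variables every integral is over the short window $[v_\alpha-e/\sqrt{M},\,v_\alpha]$, whose length $|e|/\sqrt{M}$ supplies the extra powers of $M$ directly from the moment bounds on $g_{i,M}$. Applying the same change of variables to your quantity $\mathbb{E}[(\widehat{H}_M-v_\alpha)^+]-\mathbb{E}[(H-v_\alpha)^+]$ gives $\Lambda(v_\alpha)/M+O(M^{-3/2})$ in one line, so your overall strategy goes through; only the integration-by-parts route over the unbounded tail should be swapped for the short-interval computation.
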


\begin{lemma}\label{lem.b.4}
Under Assumption \ref{asm.3.2},
\begin{equation}
\left(\widetilde{v}^{N,M}_\alpha- \breve{v}^{M}_\alpha\right)=\frac{1}{\widetilde{f}(\breve{v}^M_\alpha)}
\left(\alpha-\frac{1}{N}\sum\limits_{i=1}^{N}
\mathds{1}\{\widehat{H}_M(\theta_{i})\le \breve{v}^M_\alpha\}\right)+O_{a.s.}(N^{-3/4}(\log N)^{3/4}), \label{eq.b.20}
\end{equation}
\begin{equation}\label{eq.b.21}
\small{\left(\widetilde{c}^{N,M}_\alpha- \breve{c}^{M}_\alpha\right)=\left(\frac{1}{N}\sum\limits_{i=1}^{N}
\left[\breve{v}^M_\alpha+\frac{1}{1-\alpha}
\left(\widehat{H}_M(\theta_{i})-\breve{v}^M_\alpha\right)^{+}\right]
-\breve{c}^{M}_\alpha\right)+O_{a.s.}(N^{-1}\log N),}
\end{equation}
where $O_{a.s.}(N^{-3/4}(\log N)^{3/4})$ and $O_{a.s.}(N^{-1}\log N)$ hold uniformly for all $M$.
\end{lemma}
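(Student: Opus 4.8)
The plan is to recognize that, for each fixed inner sample size $M$, the nested estimators are nothing but the one-layer empirical VaR and CVaR estimators of Lemma \ref{lem.a.1}, applied to the i.i.d.\ sample $\{\widehat{H}_M(\theta_i):i=1,\dots,N\}$. Indeed, as noted before \eqref{eq.b.2}, we have $\widetilde{v}^{N,M}_\alpha=\widehat{v}_\alpha(\widehat{H}_M(\theta))$ and $\widetilde{c}^{N,M}_\alpha=\widehat{c}_\alpha(\widehat{H}_M(\theta))$; since $\theta_1,\dots,\theta_N$ are i.i.d.\ from the posterior and the inner noises are drawn independently across scenarios, $\widehat{H}_M(\theta_1),\dots,\widehat{H}_M(\theta_N)$ are i.i.d.\ draws from the distribution with density $\widetilde{f}_M$ and (deterministic) $\alpha$-quantile $\breve{v}^M_\alpha$. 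Thus \eqref{eq.b.20} and \eqref{eq.b.21} are precisely \eqref{eq.a.5} and \eqref{eq.a.6} with $H(\theta)$, $v_\alpha$, $c_\alpha$, $f$ replaced by $\widehat{H}_M(\theta)$, $\breve{v}^M_\alpha$, $\breve{c}^M_\alpha$, $\widetilde{f}_M$. To invoke Lemma \ref{lem.a.1} I first check its hypothesis for $\widetilde{f}_M$: by Assumption \ref{asm.3.2}.(ii) the density $\widetilde{f}_M$ is continuously differentiable, and by Lemma \ref{lem.b.1} together with $\breve{v}^M_\alpha\to v_\alpha$ (Lemma \ref{lem.b.2}) it is positive and continuous around $\breve{v}^M_\alpha$ for $M$ large. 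This yields \eqref{eq.b.20}--\eqref{eq.b.21} for every fixed $M$, but with remainder constants a priori depending on $M$.

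The substantive content of the lemma is the \emph{uniformity} of the $O_{a.s.}$ remainders over all $M$, and this is where the bulk of the work lies. Revisiting the cited derivations, the remainder constant in the Bahadur representation (\cite{serfling2009approximation}, Theorem 2.5.1) and in the CVaR representation (\cite{sun2010asymptotic}, Theorem 2) depends only on (i) a positive lower bound for the density at the quantile, $\widetilde{f}_M(\breve{v}^M_\alpha)$, and (ii) uniform bounds on $\widetilde{f}_M$ and $\widetilde{f}_M'$ in a fixed neighborhood of $\breve{v}^M_\alpha$ (plus, for CVaR, a uniform second-moment bound). I would establish the uniform lower bound $\inf_M\widetilde{f}_M(\breve{v}^M_\alpha)>0$ by combining Lemma \ref{lem.b.1}, which gives $\widetilde{f}_M(\breve{v}^M_\alpha)\to f(v_\alpha)$, with $f(v_\alpha)>0$ from Assumption \ref{asm.3.1}.(ii) to control all large $M$, and by handling the finitely many remaining small $M$ individually. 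The uniform upper and smoothness bounds follow directly from Assumption \ref{asm.3.2}.(iii): writing $\widetilde{f}_M(t)=\int_{\mathbb{R}}p_M(t-e/\sqrt{M},e)\,de$ gives $\widetilde{f}_M(t)\le\int g_{0,M}(e)\,de$ and $|\widetilde{f}_M'(t)|\le\int g_{1,M}(e)\,de$, both uniformly bounded over $M$ since $\sup_M\int g_{i,M}(e)\,de<\infty$.

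With these three uniform quantities in hand, I would retrace the derivations of the remainder terms $A_N$ and $B_N$ — which rest on oscillation bounds for the empirical c.d.f.\ of $\{\widehat{H}_M(\theta_i)\}$ together with a Borel--Cantelli argument — to confirm that the implicit constants are controlled solely by the uniform quantities (i)--(ii), and hence may be chosen independently of $M$. This delivers the claimed $O_{a.s.}(N^{-3/4}(\log N)^{3/4})$ and $O_{a.s.}(N^{-1}\log N)$ bounds holding uniformly in $M$.

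The main obstacle is exactly this uniformity step: the cited representations are stated for a single fixed distribution, so establishing that their remainder bounds hold with a \emph{common} constant across the whole family $\{\widetilde{f}_M\}$ requires reopening the proofs to pin down the dependence of the constants on the density and then feeding in the uniform bounds supplied by Assumption \ref{asm.3.2} and Lemma \ref{lem.b.1}. Particular care is needed for small $M$, where $\breve{v}^M_\alpha$ may be far from $v_\alpha$ and the asymptotic convergence of Lemma \ref{lem.b.1} has not yet taken effect; this is why the lower-bound argument is split into a ``large $M$'' tail handled by convergence and a finite ``small $M$'' part handled directly.
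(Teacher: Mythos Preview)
Your approach is correct in outline and identifies the right ingredients for uniformity in $M$: a positive lower bound on $\widetilde{f}_M(\breve{v}^M_\alpha)$ via Lemma~\ref{lem.b.1}, and uniform upper bounds on $\widetilde{f}_M$, $\widetilde{f}_M'$ from the dominating functions in Assumption~\ref{asm.3.2}.(iii). The paper, however, takes a cleaner route that avoids reopening the proofs of the cited representations. Its key device is the probability integral transform: setting $U(\theta):=\widetilde{F}_M(\widehat{H}_M(\theta))$, the variables $U(\theta_1),\dots,U(\theta_N)$ are i.i.d.\ Uniform$[0,1]$ \emph{regardless of $M$}, so the Bahadur bound $U(\theta_{(\alpha N)})-\alpha=O_{a.s.}(N^{-1/2}(\log N)^{1/2})$ and the oscillation bounds for the uniform empirical c.d.f.\ (Lemmas 2.5.4B, 2.5.4E in \cite{serfling2009approximation}) are automatically free of $M$. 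The passage back to the original scale is then just a Taylor expansion of $\widetilde{G}_M=\widetilde{F}_M^{-1}$, which introduces only the factor $1/\widetilde{f}_M(\breve{v}^M_\alpha)$ (and a second-order term involving $\widetilde{f}_M'/\widetilde{f}_M^3$), whose uniform boundedness follows exactly as you argue. For \eqref{eq.b.21} the paper does not invoke Lemma~\ref{lem.a.1} at all but decomposes $\widetilde{c}^{N,M}_\alpha-\breve{c}^M_\alpha$ algebraically into the leading term plus two residuals, each controlled by $|\widetilde{v}^{N,M}_\alpha-\breve{v}^M_\alpha|$ times an empirical-process quantity on the uniforms. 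What the transform buys is a clean factorization: the distribution-free empirical-process part is separated from the $M$-dependent smooth part, so uniformity reduces to the single scalar bound $\sup_M 1/\widetilde{f}_M(\breve{v}^M_\alpha)<\infty$ rather than a constant-tracking exercise through Serfling's and Sun--Hong's proofs. Your route would work, but the paper's is shorter and more transparent about where the $M$-dependence enters.
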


By Lemma \ref{lem.b.2} and Lemma \ref{lem.b.3}, \eqref{eq.b.3} naturally holds. Furthermore, Lemma \ref{lem.b.4} implies \eqref{eq.b.4}.

\section{Proof of Theorem \ref{thm.3.4}}\label{ap.d}

Follow the notations in Appendix \ref{ap.a} and \ref{ap.b}, we need to show that under Assumption \ref{asm.3.2}, the existence of the limit $K^2=\lim\limits_{N, M\rightarrow \infty}N/M^2$ is a sufficient and necessary condition for
\begin{eqnarray}
&& \lim\limits_{N,M\rightarrow \infty}\sqrt{N}\left(\widetilde{v}^{N,M}_\alpha-v_{\alpha}\right)
\overset{\mathcal{D}}=
\sigma_{v}\mathcal{N}(0,1)+|K|\mu_v,\label{eq.d.1}\\
&&\lim\limits_{N,M\rightarrow \infty}\sqrt{N}\left(\widetilde{c}^{N,M}_\alpha-c_{\alpha}\right)
\overset{\mathcal{D}}=
\sigma_{c}\mathcal{N}(0,1)+|K|\mu_c. \label{eq.d.2}
\end{eqnarray}
By Lemma \ref{lem.b.2}, \ref{lem.b.3} and \ref{lem.b.4}, we have
\begin{eqnarray*}
\left(\widetilde{v}^{N,M}_\alpha-v_\alpha\right)
=Err_1+Bias_1&=&\frac{1}{\widetilde{f}(\breve{v}^M_\alpha)}
\left(\alpha-\frac{1}{N}\sum\limits_{i=1}^{N}
\mathds{1}\{\widehat{H}_M(\theta_{i})\le \breve{v}^M_\alpha\}\right)\nonumber \\ &&+\frac{-\Lambda^\prime(v_{\alpha})}
{M f(v_{\alpha})} + o_M(\frac{1}{M})+O_{a.s.}(N^{-3/4}(\log N)^{3/4}), \label{eq.d.3}\\
\left(\widetilde{c}^{N,M}_\alpha-c_\alpha\right)
=Err_2+Bias_2&=&\left(\frac{1}{N}\sum\limits_{i=1}^{N}
\left[\breve{v}^M_\alpha+\frac{1}{1-\alpha}
\left(\widehat{H}_M(\theta_{i})-\breve{v}^M_\alpha\right)^{+}\right]
-\breve{c}^{M}_\alpha\right)\nonumber\\
&&+\frac{\Lambda(v_{\alpha})}
{(1-\alpha)M } + o_M(\frac{1}{M})+O_{a.s.}(N^{-1}\log N). \label{eq.d.4}
\end{eqnarray*}
Note that $\breve{v}^M_\alpha \rightarrow v_\alpha$, $\breve{c}^M_\alpha \rightarrow c_\alpha$, and $\widetilde{f}(\breve{v}^M_\alpha)\rightarrow f(v_\alpha)$ as $M\rightarrow\infty$. We have
$$
\lim\limits_{M\rightarrow \infty}
\frac{1}{\widetilde{f}(\breve{v}^M_\alpha)} \left(\alpha-
\mathds{1}\{\widehat{H}_M(\theta)\le \breve{v}^M_\alpha\}\right)
=\frac{1}{f(v_\alpha)} \left(\alpha- \mathds{1}\{H(\theta)\le v_\alpha\}\right),\quad w.p.1,
$$
and
$$
\lim\limits_{M\rightarrow \infty}
\left(\breve{v}^M_\alpha+\frac{1}{1-\alpha}
\left(\widehat{H}_M(\theta)-\breve{v}^M_\alpha\right)^{+}
-\breve{c}^{M}_\alpha\right)=
\left(v_\alpha+\frac{1}{1-\alpha}
\left(H(\theta)-v_\alpha\right)^{+}
-c_\alpha\right),\quad w.p.1.
$$
What's more,
\begin{eqnarray*}
\sigma^2_{v}&=&Var\left[\frac{1}{f(v_\alpha)}
\left(\alpha-
\mathds{1}\{H(\theta_{i})\le v_\alpha\}\right)\right]\\
&=&\frac{1}{f^2(v_\alpha)} Var\left[\mathds{1}\{H(\theta)\le v_\alpha\}\right]= \frac{\alpha(1-\alpha)}{f^2(v_\alpha)},
\end{eqnarray*}
and
\begin{eqnarray*}
\sigma^2_{c}=Var\left[
\left[v_{\alpha}+\frac{1}{1-\alpha}
\left(H(\theta_{i})-v_{\alpha}\right)^{+}\right]
-c_{\alpha}\right]=\frac{1}{(1-\alpha)^2} Var\left[\left(H(\theta)
-v_{\alpha}\right)^{+}\right].
\end{eqnarray*}
Therefore, by Central Limit Theorem, \eqref{eq.d.1} and \eqref{eq.d.2} hold if and only if $K^2=\lim\limits_{N, M\rightarrow \infty}N/M^2$ exists.

\section{Proof of Theorem \ref{thm.3.5}}\label{ap.e}
When $N=o(M^2)$, the bias terms will be asymptotically insignificant compared with the $O(\frac{1}{\sqrt{N}})$ error term. The CI in \eqref{eq.3.25} and \eqref{eq.3.26} will become
\begin{eqnarray}
\left[\widetilde{v}_{\alpha}+
\frac{t_{\beta/2,N-1}\widehat{\sigma}_{v}}{\sqrt{N}},~~ \widetilde{v}_{\alpha}+
\frac{t_{1-\beta/2,N-1}\widehat{\sigma}_{v}}{\sqrt{N}}\right]
\end{eqnarray}
and
\begin{eqnarray}
\left[
\widetilde{c}_{\alpha}+
\frac{t_{\beta/2,N-1}\widehat{\sigma}_{c}}{\sqrt{N}},~~
\widetilde{c}_{\alpha}+
\frac{t_{1-\beta/2,N-1}\widehat{\sigma}_{c}}{\sqrt{N}}\right].
\end{eqnarray}
So we just need to show the following limits
$$
\left\{\begin{array}{l}
\lim\limits_{N\rightarrow \infty}\lim\limits_{M\rightarrow \infty} P\{|Err_1|\le 2\frac{t_{1-\beta/2,N-1}\widehat{\sigma}_{v}}{\sqrt{N}}\}=1-\beta,\\
\lim\limits_{N\rightarrow \infty}\lim\limits_{M\rightarrow \infty} P\{|Err_2|\le2\frac{t_{1-\beta/2,N-1}\widehat{\sigma}_{c}}{\sqrt{N}}\}
=1-\beta.
\end{array}\right.
$$
where recall that $Err_1$, $Err_2$ are defined in \eqref{eq.3.12} and \eqref{eq.3.13}. In view of the fact that a Student's t-distribution converges to a standard normal distribution as the degree of freedom goes to infinity, the almost sure convergence of variance estimators by Strong Law of Large Numbers, and the consistency of kernel density estimation, these limits naturally hold following Theorem \ref{thm.3.4}.



\bibliographystyle{ACM-Reference-Format-Journals}
\bibliography{Zhou-Bibtex}


\elecappendix
\textbf{Proof of Lemma \ref{lem.a.1}.}
\begin{proof}
The asymptotical representation \eqref{eq.a.5} is exactly Theorem 2.5.1 in \cite{serfling2009approximation} under Assumption \ref{asm.3.1}.(ii). The asymptotical representation \eqref{eq.a.6} is the special case of Theorem 2 in \cite{sun2010asymptotic}, when the importance sampling measure $\mathcal{L}\equiv 1$.
\end{proof}

\textbf{Proof of Lemma \ref{lem.b.1}.}
\begin{proof}
This result is exactly Lemma 1 in \cite{gordy2010nested}. For convenience, we will briefly present the proof. Recall that $\widehat{H}_M(\theta)=H(\theta)+\bar{\mathcal{E}}_M/\sqrt{M}$, where $(H(\theta), \bar{\mathcal{E}}_M)$ has a joint distribution $p_M(h, e)$. Therefore,
$$
\widetilde{f}_M(t_M)=\int_{\mathbb{R}} p_M(t_M-e/\sqrt{M}, e)de\quad
\mbox{and}\quad
f(t)=\int_{\mathbb{R}} p_M(t, e)de.
$$
It follows that
$$
\widetilde{f}_M(t_M)-f(t)=\int_{\mathbb{R}}\left( p_M(t-e/\sqrt{M}, e)-p_M(t, e)\right)de.
$$
By Taylor series expansion, this equals
$$
(t_M-t)\int_{\mathbb{R}} \frac{\partial}{\partial t} p_M(\check{t}_M, e)de - \frac{1}{\sqrt{M}} \int_{\mathbb{R}}e \frac{\partial}{\partial t} p_M(\check{t}_M, e)de,
$$
where $\check{t}_M$ lives in between $t_M$ and $t$. By Assumption 1 and the fact that $t_M\rightarrow t$ as $M\rightarrow \infty$, both terms converge to zero as $M\rightarrow \infty$.
\end{proof}

\textbf{Proof of Lemma \ref{lem.b.2}.}
\begin{proof}
This result is very similar to Proposition 1 in \cite{gordy2010nested}. The proof here will mainly follow \cite{gordy2010nested}'s proof.

Recall that $\widetilde{F}_M(\cdot)$ is the c.d.f. of the noised mean response $\widehat{H}_M(\theta)$, and $\breve{v}_\alpha^{M}$ is the exact $\alpha$-level VaR of $\widehat{H}_M(\theta)$. Thus, $\widetilde{F}_M(\breve{v}_\alpha^{M})=\alpha$. By Taylor expansion, we have
\begin{equation*}
\alpha=\widetilde{F}_M(\breve{v}_\alpha^{M})=
\widetilde{F}_M(v_\alpha)+(\breve{v}_\alpha^{M}-v_\alpha) \widetilde{f}_M(v_\alpha)+\frac{(\breve{v}_\alpha^{M}-v_\alpha)^2}{2}
\widetilde{f}^\prime_M(\check{v}^M_\alpha),
\end{equation*}
where $\check{v}^M_\alpha$ lives in between $\breve{v}^M_\alpha$ and $v_\alpha$. Therefore,
\begin{equation}\label{eq.b.6}
\alpha-
\widetilde{F}_M(v_\alpha)=(\breve{v}_\alpha^{M}-v_\alpha) \widetilde{f}_M(v_\alpha)+\frac{(\breve{v}_\alpha^{M}-v_\alpha)^2}{2}
\widetilde{f}_M^\prime(\check{v}^M_\alpha),
\end{equation}
Furthermore, notice that
\begin{equation}\label{eq.b.7}
\widetilde{F}_M(v_\alpha)=\int_{-\infty}^{v_\alpha}
\widetilde{f}_M(t)dt= \int_{\mathbb{R}}\int_{-\infty}^{v_\alpha-e/\sqrt{M}} p_M(t,e)dtde,
\end{equation}
and
\begin{equation}\label{eq.b.8}
\alpha=F(v_\alpha)=\int_{-\infty}^{v_\alpha} f(t)dt =
\int_{\mathbb{R}}\int_{-\infty}^{v_\alpha} p_M(t,e)dtde.
\end{equation}
Combining \eqref{eq.b.7} and \eqref{eq.b.8}, we have
\begin{equation}\label{eq.b.9}
\alpha-\widetilde{F}_M(v_\alpha)=
\int_{\mathbb{R}}\int_{v_\alpha-e/\sqrt{M}}^{v_\alpha} p_M(t,e)dtde.
\end{equation}
By Taylor expansion, we have
\begin{equation*}
p_M(t,e)=p_M(v_\alpha, e)+ (t-v_\alpha)\frac{\partial}{\partial t} p_M(v_\alpha, e) +\frac{(t-v_\alpha)^2}{2}\frac{\partial^2}{\partial t^2} p_M(\check{v}_\alpha, e),
\end{equation*}
where $\check{v}_\alpha$ lives in between $v_\alpha$ and $t$. Hence,
\begin{eqnarray}\label{eq.b.11}
\alpha-\widetilde{F}_M(v_\alpha)&=& \int_{\mathbb{R}}\int_{v_\alpha-e/\sqrt{M}}^{v_\alpha} p_M(v_\alpha, e)dtde + \int_{\mathbb{R}}\int_{v_\alpha-e/\sqrt{M}}^{v_\alpha} (t-v_\alpha)\frac{\partial}{\partial t} p_M(v_\alpha, e)dt de \nonumber \\
&&+\int_{\mathbb{R}}\int_{v_\alpha-e/\sqrt{M}}^{v_\alpha}  \frac{(t-v_\alpha)^2}{2}\frac{\partial^2}{\partial t^2} p_M(\check{v}_\alpha, e) dtde.
\end{eqnarray}
The first term of the right hand side of \eqref{eq.b.11} satisfies
$$
\int_{\mathbb{R}}\int_{v_\alpha-e/\sqrt{M}}^{v_\alpha} p_M(v_\alpha, e)dtde =\int_{\mathbb{R}}\frac{e}{\sqrt{M}} p_M(v_\alpha, e)de=
\frac{f(v_\alpha)}{\sqrt{M}} \mathbb{E}[\bar{\mathcal{E}}_M|H(\theta)=v_\alpha]=0.
$$
The second term of \eqref{eq.b.11} satisfies
\begin{eqnarray*}
\int_{\mathbb{R}}\int_{v_\alpha-e/\sqrt{M}}^{v_\alpha} (t-v_\alpha)\frac{\partial}{\partial t} p_M(v_\alpha, e)dt de
&=&-\frac{1}{2M} \int_{\mathbb{R}} e^2\frac{\partial}{\partial t} p_M(v_\alpha, e) de\\
&=& -\frac{1}{2M} \frac{\partial}{\partial t} \int_{\mathbb{R}} e^2
p_M(v_\alpha, e) de\\
&=& -\frac{1}{2M} \frac{\partial}{\partial t} f(v_\alpha)\mathbb{E}
[\tau_\theta^2| H(\theta)=v_\alpha]\\
&=&-\frac{1}{M} \Lambda^{\prime}(v_\alpha).
\end{eqnarray*}
By Assumption \ref{asm.3.2}, the third term of \eqref{eq.b.11} is in the order of $O_M(M^{-\frac{3}{2}})$.
Therefore,
\begin{equation}\label{eq.b.12}
\alpha-\widetilde{F}_M(v_\alpha)=-\frac{1}{M} \Lambda^{\prime}(v_\alpha)+ O_M(M^{-\frac{3}{2}}).
\end{equation}
Combining \eqref{eq.b.12} with \eqref{eq.b.6}, we have
\begin{equation*}
(\breve{v}_\alpha^{M}-v_\alpha) \widetilde{f}_M(v_\alpha)+\frac{(\breve{v}_\alpha^{M}-v_\alpha)^2}{2}
\widetilde{f}_M^\prime(\check{v}^M_\alpha)=
-\frac{1}{M} \Lambda^{\prime}(v_\alpha)+ O_M(M^{-\frac{3}{2}}),
\end{equation*}
where note that by Assumption \ref{asm.3.2}, it is easy to see that $\widetilde{f}_M^\prime(t)$ is uniformly bounded for all $t$ and $M$. Combining with Lemma \ref{lem.b.1}, Lemma \ref{lem.b.2} holds.
\end{proof}

\textbf{Proof of Lemma \ref{lem.b.3}.}
\begin{proof}
The result here is very similar to Proposition 3 in \cite{gordy2010nested}, and our proof will mainly follow \cite{gordy2010nested}'s proof.
Note that by Mean Value Theorem,
\begin{eqnarray*}
\breve{c}^{M}_\alpha &=&\frac{1}{1-\alpha}\mathbb{E}\left[
\widetilde{H}_M(\theta)\cdot \mathds{1}\{\widetilde{H}_M(\theta)\ge \breve{v}_\alpha^{M}\}\right]=\frac{1}{1-\alpha} \int_{\breve{v}^M_\alpha}^{\infty}t\widetilde{f}_M(t)dt\\
&=&\frac{1}{1-\alpha} \int_{v_\alpha}^{\infty}t\widetilde{f}_M(t)dt+
\frac{1}{1-\alpha} \int^{v_\alpha}_{\breve{v}^M_\alpha}t\widetilde{f}_M(t)dt\\
&=&\frac{1}{1-\alpha}\mathbb{E}\left[
\widetilde{H}_M(\theta)\cdot \mathds{1}\{\widetilde{H}_M(\theta)\ge v_\alpha\}\right]+ \frac{1}{1-\alpha} (v_\alpha-\breve{v}^M_\alpha)
t_v \widetilde{f}_M(t_v),
\end{eqnarray*}
where $t_v$ lives in between $v_\alpha$ and $\breve{v}^M_\alpha$. By
Lemma \ref{lem.b.2}, we know
\begin{equation*}
\frac{1}{1-\alpha} (v_\alpha-\breve{v}^M_\alpha)
t_v \widetilde{f}_M(t_v)=\frac{v_\alpha \Lambda^{\prime}(v_\alpha)}{(1-\alpha)M}+o_M(\frac{1}{M}).
\end{equation*}
Therefore,
\begin{eqnarray*}\label{eq.b.14}
\breve{c}^{M}_\alpha &=&\frac{1}{1-\alpha}\mathbb{E}\left[
\widetilde{H}_M(\theta)\cdot \mathds{1}\{\widetilde{H}_M(\theta)\ge v_\alpha\}\right]+\frac{v_\alpha \Lambda^{\prime}(v_\alpha)}{(1-\alpha)M}+o_M(\frac{1}{M}).
\end{eqnarray*}
Further notice that
\begin{eqnarray*}
&&\frac{1}{1-\alpha}\mathbb{E}\left[
\widetilde{H}_M(\theta)\cdot \mathds{1}\{\widetilde{H}_M(\theta)\ge v_\alpha\}\right]=\frac{1}{1-\alpha}
\int_{\mathbb{R}}\int_{v_\alpha-e/\sqrt{M}}^{\infty}
(t+e/\sqrt{M}) p_M(t,e)dtde, \label{eq.b.15}\\
&&c_\alpha=\frac{1}{1-\alpha}\mathbb{E}\left[H(\theta)\cdot \mathds{1}\{H(\theta)\ge v_\alpha\}\right]=\frac{1}{1-\alpha}
\int_{\mathbb{R}}\int_{v_\alpha}^{\infty}t p_M(t,e)dtde, \label{eq.b.16}
\end{eqnarray*}
and
$$
\int_{\mathbb{R}}\int_{v_\alpha}^{\infty} e p_M(t,e)dtde=\int_{v_\alpha}^{\infty}\mathbb{E}[\tilde{\mathcal{E}}_M|
H(\theta)=t] f(t)dt=0.
$$
Therefore,
\begin{eqnarray}
\breve{c}^{M}_\alpha-c_\alpha&=&\frac{1}{1-\alpha}\left(
\int_{\mathbb{R}}\int_{v_\alpha-e/\sqrt{M}}^{v_\alpha}t p_M(t,e)dtde+
\frac{1}{\sqrt{N}}\int_{\mathbb{R}} e \int_{v_\alpha-e/\sqrt{M}}^{v_\alpha} p_M(t,e)dtde\right)\nonumber\\
&&+\frac{v_\alpha\Lambda^{\prime}(v_\alpha)}{(1-\alpha)M}
+o_M(\frac{1}{M}).\label{eq.b.17}
\end{eqnarray}
Similar to the derivation (by taking Taylor expansion) from \eqref{eq.b.9} to \eqref{eq.b.12}, we have
\begin{eqnarray}\label{eq.b.18}
\frac{1}{1-\alpha}\int_{\mathbb{R}}\int_{v_\alpha-e/\sqrt{M}}^{v_\alpha}
t p_M(t,e)dtde=-\frac{\Lambda(v_\alpha)}{(1-\alpha)M}
-\frac{v_\alpha \Lambda^{\prime}(v_\alpha)}{(1-\alpha)M}+O_M(M^{-\frac{3}{2}}),
\end{eqnarray}
and
\begin{eqnarray}\label{eq.b.19}
\frac{1}{1-\alpha}\frac{1}{\sqrt{N}}\int_{\mathbb{R}} e \int_{v_\alpha-e/\sqrt{M}}^{v_\alpha} p_M(t,e)dtde=2\frac{\Lambda(v_\alpha)}{(1-\alpha)M}
+O_M(M^{-\frac{3}{2}}).
\end{eqnarray}
Combining \eqref{eq.b.17}, \eqref{eq.b.18}, and \eqref{eq.b.19},
\eqref{eq.b.13} holds and Lemma \ref{lem.b.3} is proven.
\end{proof}

\textbf{Proof of Lemma \ref{lem.b.4}.}
\begin{proof}
Let us first establish \eqref{eq.b.20}.
For simplicity, let us use $G(\cdot)$ and $\widetilde{G}_M(\cdot)$ to denote the inverse functions of $F(\cdot)$ and $\widetilde{F}_M(\cdot)$, respectively. Furthermore, denote $U(\theta)=\widetilde{F}_M(\widehat{H}_M(\theta))$. Clearly,
$\widehat{H}_M(\theta)=\widetilde{G}_M(U(\theta))$ and $\breve{v}^{M}_\alpha=\widetilde{G}_M(\alpha)$. It is easy to see that $U(\theta)$ is uniformly distributed over $[0,1]$. Moreover, from the relationship $\widehat{H}_M(\theta_{(1)})<\cdot\cdot\cdot<
\widehat{H}_M(\theta_{(N)})$, we know that $U(\theta_{(1)})<\cdot\cdot\cdot U(\theta_{(N)})$ is the corresponding order statistics of $N$ i.i.d. uniformly distributed random variables. Furthermore, let us use $\widehat{F}^N_u(\cdot)$ to denote the sample c.d.f. induced by $U(\theta_1),...,U(\theta_N)$. That is
$$
\widehat{F}^N_u(t)=\frac{1}{N}\sum\limits_{i=1}^N \mathds{1}\{
U(\theta_i)\le t\}.
$$
By Lemma \ref{lem.a.1}, we know that
\begin{eqnarray}\label{eq.b.22}
U(\theta_{(\alpha N)})-\alpha=\left(\alpha-\frac{1}{N}\sum\limits_{i=1}^{N}
\mathds{1}\{U(\theta_i)\le \alpha\}
\right)+O_{a.s.}(N^{-3/4}(\log N)^{3/4}).
\end{eqnarray}
Furthermore, by Taylor expansion,
\begin{eqnarray*}\label{eq.b.23}
\widetilde{v}^{N,M}_\alpha&=&\widehat{H}_M(\theta_{(\alpha N)})=
\widetilde{G}_M(U(\theta_{(\alpha N)}))\\
&=&\widetilde{G}_M(\alpha)+(U(\theta_{(\alpha N)})-\alpha)
\widetilde{G}_M^\prime(\alpha)+\frac{(U(\theta_{(\alpha N)})-\alpha)^2}{2} \widetilde{G}_M^{\prime\prime}(u)\\
&=&\breve{v}^{M}_\alpha+\frac{1}{\widetilde{f}_M(\breve{v}^{M}_\alpha)}
(U(\theta_{(\alpha N)})-\alpha)+\left(-\frac{\widetilde{f}^\prime_M(\widetilde{G}_M(u))}
{\widetilde{f}^3_M(\widetilde{G}_M(u))}\right)\frac{(U(\theta_{(\alpha N)})-\alpha)^2}{2},
\end{eqnarray*}
where $u$ lives in between $U(\theta_{(\alpha N)})$ and $\alpha$, and we use the facts that $\widetilde{G}_M(\alpha)=\breve{v}^{M}_\alpha$, $\widetilde{G}_M^\prime(\alpha)=1/\widetilde{f}_M(\breve{v}^{M}_\alpha)$,
and $\widetilde{G}_M^{\prime\prime}(u)
=\widetilde{f}^\prime_M(\widetilde{G}_M(u))/
\widetilde{f}^3_M(\widetilde{G}_M(u))$.
Therefore,
\begin{eqnarray}\label{eq.b.23}
\widetilde{v}^{N,M}_\alpha-\breve{v}^{M}_\alpha=
\frac{1}{\widetilde{f}_M(\breve{v}^{M}_\alpha)}
(U(\theta_{(\alpha N)})-\alpha)+\left(-\frac{\widetilde{f}^\prime_M(\widetilde{G}_M(u))}
{\widetilde{f}^3_M(\widetilde{G}_M(u))}\right)\frac{(U(\theta_{(\alpha N)})-\alpha)^2}{2}.
\end{eqnarray}
On the other hand, by Lemma 2.5.4B in \cite{serfling2009approximation}, we have for sufficiently large $N$
\begin{equation*}\label{eq.b.24}
|U(\theta_{(\alpha N)})-\alpha|\le 2 N^{-\frac{1}{2}}\left(\log N\right)^{\frac{1}{2}}.
\end{equation*}
Combining with \eqref{eq.b.22} and \eqref{eq.b.23}, we have
\begin{eqnarray}\label{eq.b.25}
&&\widetilde{v}^{N,M}_\alpha-\breve{v}^{M}_\alpha=
\frac{1}{\widetilde{f}_M(\breve{v}^{M}_\alpha)}\left\{
\left(\frac{1}{N}\sum\limits_{i=1}^{N}
\mathds{1}\{U(\theta_i)\le \alpha\}-
\alpha\right)+O_{a.s.}(N^{-3/4}(\log N)^{3/4})
\right\}\nonumber\\
&=&\frac{1}{\widetilde{f}_M(\breve{v}^{M}_\alpha)}
\left(\frac{1}{N}\sum\limits_{i=1}^{N}
\mathds{1}\{U(\theta_i)\le \alpha\}-\alpha\right)+
\frac{1}{\widetilde{f}_M(\breve{v}^{M}_\alpha)}
O_{a.s.}(N^{-3/4}(\log N)^{3/4})\nonumber\\
&=&\frac{1}{\widetilde{f}_M(\breve{v}^M_\alpha)}
\left(\frac{1}{N}\sum\limits_{i=1}^{N}
\mathds{1}\{\widehat{H}_M(\theta_{i})\le \breve{v}^M_\alpha\}-\alpha\right)+
\frac{1}{\widetilde{f}_M(\breve{v}^{M}_\alpha)}
O_{a.s.}(N^{-3/4}(\log N)^{3/4}).
\end{eqnarray}
Notice that $\widetilde{f}_M(\breve{v}^{M}_\alpha)$ is strictly positive and $\widetilde{f}_M(\breve{v}^{M}_\alpha)\rightarrow f(v_\alpha)$ as $M\rightarrow \infty$. Therefore, $\sup\limits_{M} 1/\widetilde{f}_M(\breve{v}^{M}_\alpha)<\infty$. Thus, \eqref{eq.b.20} holds.

It remains to show \eqref{eq.b.21}. Notice that by definition
\begin{eqnarray*}
&&\widetilde{c}^{N,M}_\alpha- \breve{c}^{M}_\alpha=
\frac{1}{(1-\alpha) N}\sum\limits_{i=1}^{N} \widehat{H}_M(\theta_{i})\mathds{1}\{\widehat{H}_M(\theta_{i})\ge \widetilde{v}^{N,M}_\alpha\}- \breve{c}^{M}_\alpha\\
&=&\widetilde{v}^{N,M}_\alpha+ \frac{1}{(1-\alpha) N}\sum\limits_{i=1}^{N} \left(\widehat{H}_M(\theta_{i})-\widetilde{v}^{N,M}_\alpha\right)^{+}
- \breve{c}^{M}_\alpha\\
&=&\left(\frac{1}{N}\sum\limits_{i=1}^{N}
\left[\breve{v}^M_\alpha+\frac{1}{1-\alpha}
\left(\widehat{H}_M(\theta_{i})-\breve{v}^M_\alpha\right)^{+}\right]
-\breve{c}^{M}_\alpha\right)\\
&&+\left(\widetilde{v}^{N,M}_\alpha-\breve{v}^M_\alpha\right)
+\frac{1}{(1-\alpha) N}\sum\limits_{i=1}^{N}\left[
\left(\widehat{H}_M(\theta_{i})-\widetilde{v}^{N,M}_\alpha\right)^{+}-
\left(\widehat{H}_M(\theta_{i})-\breve{v}^M_\alpha\right)^{+}
\right]\\
&=&\left(\frac{1}{N}\sum\limits_{i=1}^{N}
\left[\breve{v}^M_\alpha+\frac{1}{1-\alpha}
\left(\widehat{H}_M(\theta_{i})-\breve{v}^M_\alpha\right)^{+}\right]
-\breve{c}^{M}_\alpha\right)+(\ast),
\end{eqnarray*}
where
\begin{eqnarray*}
(\ast):=\left(\widetilde{v}^{N,M}_\alpha-\breve{v}^M_\alpha\right)
+\frac{1}{(1-\alpha) N}\sum\limits_{i=1}^{N}\left[
\left(\widehat{H}_M(\theta_{i})-\widetilde{v}^{N,M}_\alpha\right)^{+}-
\left(\widehat{H}_M(\theta_{i})-\breve{v}^M_\alpha\right)^{+}
\right].
\end{eqnarray*}
We only need to show that $(\ast)$ is in the order of $O_{a.s.}(N^{-1}\log N)$ uniformly for all $M$. Note that the second term in $(\ast)$
\begin{eqnarray*}
&&\frac{1}{(1-\alpha) N}\sum\limits_{i=1}^{N}\left[
\left(\widehat{H}_M(\theta_{i})-\widetilde{v}^{N,M}_\alpha\right)^{+}-
\left(\widehat{H}_M(\theta_{i})-\breve{v}^M_\alpha\right)^{+}
\right]\\
&=&\frac{1}{(1-\alpha) N}\sum\limits_{i=1}^{N}\Bigl[
\left(\widehat{H}_M(\theta_{i})-\widetilde{v}^{N,M}_\alpha\right)
\mathds{1}\{\widehat{H}_M(\theta_{i})\ge \widetilde{v}^{N,M}_\alpha\}\\
&&-\left(\widehat{H}_M(\theta_{i})-\breve{v}^M_\alpha\right)
\mathds{1}\{\widehat{H}_M(\theta_{i})\ge \breve{v}^{M}_\alpha\}
\Bigr]\\
&=&\frac{1}{(1-\alpha) N}\sum\limits_{i=1}^{N}\left[
\left(\breve{v}^{M}_\alpha-\widetilde{v}^{N,M}_\alpha\right)
\mathds{1}\{\widehat{H}_M(\theta_{i})\ge \widetilde{v}^{N,M}_\alpha\}\right]\\
&&+\frac{1}{(1-\alpha) N}\sum\limits_{i=1}^{N}\left(\widehat{H}_M(\theta_{i})-
\breve{v}^{M}_\alpha\right)\left[
\mathds{1}\{\widehat{H}_M(\theta_{i})\ge \widetilde{v}^{N,M}_\alpha\}-\mathds{1}\{\widehat{H}_M(\theta_{i})\ge \breve{v}^{M}_\alpha\}
\right]\\
&=&\frac{1}{(1-\alpha)N}\left(\breve{v}^{M}_\alpha-
\widetilde{v}^{N,M}_\alpha\right)+\frac{1}{(1-\alpha) N}\sum\limits_{i=1}^{N}\left[
\left(\widetilde{v}^{N,M}_\alpha-\breve{v}^{M}_\alpha\right)
\mathds{1}\{\widehat{H}_M(\theta_{i})\le \widetilde{v}^{N,M}_\alpha\}\right]\\
&&+\frac{1}{(1-\alpha) N}\sum\limits_{i=1}^{N}\left(\widehat{H}_M(\theta_{i})-
\breve{v}^{M}_\alpha\right)\left[
\mathds{1}\{\widehat{H}_M(\theta_{i})\le \breve{v}^{M}_\alpha\}-\mathds{1}\{\widehat{H}_M(\theta_{i})\le \widetilde{v}^{N,M}_\alpha\}
\right]\\
&=&\frac{1}{(1-\alpha)N}\left(\breve{v}^{M}_\alpha-
\widetilde{v}^{N,M}_\alpha\right)+\frac{1}{(1-\alpha) N}\sum\limits_{i=1}^{N}\left[
\left(\widetilde{v}^{N,M}_\alpha-\breve{v}^{M}_\alpha\right)
\mathds{1}\{\widehat{H}_M(\theta_{i})\le \widetilde{v}^{N,M}_\alpha\}\right]+(\ast\ast\ast),
\end{eqnarray*}
where
$$
(\ast\ast\ast)=\frac{1}{(1-\alpha) N}\sum\limits_{i=1}^{N}\left(\widehat{H}_M(\theta_{i})-
\breve{v}^{M}_\alpha\right)\left[
\mathds{1}\{\widehat{H}_M(\theta_{i})\le \breve{v}^{M}_\alpha\}-\mathds{1}\{\widehat{H}_M(\theta_{i})\le \widetilde{v}^{N,M}_\alpha\}
\right].
$$
Further note that
\begin{eqnarray*}
&&\left(\widetilde{v}^{N,M}_\alpha-\breve{v}^M_\alpha\right)
+\frac{1}{(1-\alpha)N}\left(\breve{v}^{M}_\alpha-
\widetilde{v}^{N,M}_\alpha\right)\\
&&+\frac{1}{(1-\alpha) N}\sum\limits_{i=1}^{N}\left[
\left(\widetilde{v}^{N,M}_\alpha-\breve{v}^{M}_\alpha\right)
\mathds{1}\{\widehat{H}_M(\theta_{i})\le \widetilde{v}^{N,M}_\alpha\}\right]\\
&=&\frac{1}{(1-\alpha)}
\left(\widetilde{v}^{N,M}_\alpha-\breve{v}^M_\alpha\right)
\left[\frac{1}{N}\sum\limits_{i=1}^{N}
\mathds{1}\{\widehat{H}_M(\theta_{i})\le \widetilde{v}^{N,M}_\alpha\} -\alpha\right]\\
&=&\frac{1}{(1-\alpha)}
\left(\widetilde{v}^{N,M}_\alpha-\breve{v}^M_\alpha\right)
\left[\frac{1}{N}\sum\limits_{i=1}^{N}
\mathds{1}\{U(\theta_{i})\le U(\theta_{(\alpha N)})\} -\alpha\right]\\
&=&\frac{1}{(1-\alpha)}
\left(\widetilde{v}^{N,M}_\alpha-\breve{v}^M_\alpha\right)\left(
\widehat{F}^N_u(U(\theta_{(\alpha N)}))-\alpha\right)
\overset{\triangle}=(\ast\ast).
\end{eqnarray*}
Note that $(\ast)=(\ast\ast)+(\ast\ast\ast)$, we only need to show that $(\ast\ast)$ and $(\ast\ast\ast)$ both are in the order of $O_{a.s.}(N^{-1}\log N)$ uniformly for all $M$.

By Lemma 2.5.4B in \cite{serfling2009approximation}, we know that for sufficiently large $N$ (can be verified this is uniform for all $M$, as in \eqref{eq.b.25})
\begin{eqnarray}\label{eq.b.26}
|\widetilde{v}^{N,M}_\alpha-\breve{v}^M_\alpha|\le
\frac{2}{\widetilde{f}_M(\breve{v}^M_\alpha)}N^{-\frac{1}{2}}\left(\log N\right)^{\frac{1}{2}}.
\end{eqnarray}
Moreover, by applying Theorem 2.5.1 and Lemma 2.5.4B in \cite{serfling2009approximation} on $U(\theta)$, we have for sufficiently large $N$
\begin{eqnarray}\label{eq.b.27}
|\widehat{F}^N_u(\alpha)-\alpha|=2N^{-\frac{1}{2}}\left(\log N\right)^{\frac{1}{2}}+O_{a.s.}(N^{-3/4}(\log N)^{3/4}).
\end{eqnarray}
Applying Lemma 2.5.4B and Lemma 2.5.4E (with $c_0=2$, $q=1/2$) in \cite{serfling2009approximation} on $U(\theta)$,
we have for sufficiently large $N$
\begin{eqnarray}\label{eq.b.28}
|\widehat{F}^N_u(U(\theta_{(\alpha N)}))-\widehat{F}^N_u(\alpha)|=2N^{-\frac{1}{2}}\left(\log N\right)^{\frac{1}{2}}+O_{a.s.}(N^{-3/4}(\log N)^{3/4}).
\end{eqnarray}
Combining \eqref{eq.b.27} and \eqref{eq.b.28}, we have for sufficiently large $N$
\begin{eqnarray*}
|\widehat{F}^N_u(U(\theta_{(\alpha N)}))-\alpha|\le
4N^{-\frac{1}{2}}\left(\log N\right)^{\frac{1}{2}}+O_{a.s.}(N^{-3/4}(\log N)^{3/4}).
\end{eqnarray*}
Combining with \eqref{eq.b.26}, we have for sufficiently large $N$ (uniform for all $M$)
\begin{eqnarray*}\label{eq.b.29}
(\ast\ast)=\frac{8}{\widetilde{f}_M(\breve{v}^M_\alpha)}
\left(N^{-1}\left(\log N\right)+O_{a.s.}(N^{-5/4}(\log N)^{5/4})\right)
\end{eqnarray*}
In view of the fact that $\sup\limits_{M} 1/\widetilde{f}_M(\breve{v}^{M}_\alpha)<\infty$, we have $(\ast\ast)$ in the order of $O_{a.s.}(N^{-1}\log N)$ uniformly for all $M$. What is left is show $(\ast\ast\ast)$ is also in the order of $O_{a.s.}(N^{-1}\log N)$ uniformly for all $M$.
\begin{eqnarray*}
|(\ast\ast\ast)|&=&\left|\frac{1}{(1-\alpha)N}
\sum\limits_{i=1}^{N}\left(\widehat{H}_M(\theta_{i})-
\breve{v}^{M}_\alpha\right)\left[
\mathds{1}\{\widehat{H}_M(\theta_{i})\le \breve{v}^{M}_\alpha\}-\mathds{1}\{\widehat{H}_M(\theta_{i})\le \widetilde{v}^{N,M}_\alpha\}
\right]\right|\\
&\le & \frac{1}{(1-\alpha)}
\left|\widetilde{v}^{N,M}_\alpha-\breve{v}^M_\alpha \right|
\left|\frac{1}{N}
\sum\limits_{i=1}^{N}\mathds{1}\{\widehat{H}_M(\theta_{i})\le \breve{v}^{M}_\alpha\}-\frac{1}{N}
\sum\limits_{i=1}^{N}\mathds{1}\{\widehat{H}_M(\theta_{i})\le \widetilde{v}^{N,M}_\alpha\}\right|\\
&=& \frac{1}{(1-\alpha)}
\left|\widetilde{v}^{N,M}_\alpha-\breve{v}^M_\alpha \right|
\left|\frac{1}{N}
\sum\limits_{i=1}^{N}\mathds{1}\{U(\theta_{i})\le \alpha\}-\frac{1}{N}
\sum\limits_{i=1}^{N}\mathds{1}\{U(\theta_{i})\le U(\theta_{(\alpha N)})\}\right|\\
&=&\frac{1}{(1-\alpha)}
\left|\widetilde{v}^{N,M}_\alpha-\breve{v}^M_\alpha \right|
\left|\widehat{F}^N_u(U(\theta_{(\alpha N)}))-\widehat{F}^N_u(\alpha)
\right|.
\end{eqnarray*}
By \eqref{eq.b.26} and \eqref{eq.b.28}, we have for sufficiently large $N$ (uniform for all $M$)
\begin{eqnarray*}\label{eq.b.30}
(\ast\ast)&\le& \frac{1}{(1-\alpha)}
\left|\widetilde{v}^{N,M}_\alpha-\breve{v}^M_\alpha \right|
\left|\widehat{F}^N_u(U(\theta_{(\alpha N)}))-\widehat{F}^N_u(\alpha)
\right|\nonumber \\
&=&\frac{4}{\widetilde{f}_M(\breve{v}^M_\alpha)}
\left(N^{-1}\left(\log N\right)+O_{a.s.}(N^{-5/4}(\log N)^{5/4})\right).
\end{eqnarray*}
Again, in view of the fact that $\sup\limits_{M} 1/\widetilde{f}_M(\breve{v}^{M}_\alpha)<\infty$, we have $(\ast\ast\ast)$ in the order of $O_{a.s.}(N^{-1}\log N)$ uniformly for all $M$.
\end{proof}

\end{document}